\tikzstyle{gate}=[shape=rectangle, text height=1.5ex, text depth=0.25ex, yshift=0.5mm, fill=white, draw=black, minimum height=3mm, yshift=-0.5mm, minimum width=3mm, font={\small}, tikzit category=circuit, inner sep=2pt]
\tikzstyle{big gate}=[shape=rectangle, text height=1.5ex, text depth=0.25ex, yshift=0.5mm, fill=white, draw=black, minimum height=10mm, yshift=-0.5mm, minimum width=5mm, font={\small}, tikzit category=circuit]
\tikzstyle{Z dot}=[inner sep=0mm, minimum size=2mm, shape=circle, draw=black, fill={rgb,255: red,221; green,255; blue,221}, tikzit category=zx]
\tikzstyle{Z phase dot}=[minimum size=5mm, font={\footnotesize\boldmath}, shape=rectangle, rounded corners=2mm, inner sep=0.2mm, outer sep=-2mm, scale=0.8, tikzit shape=circle, draw=black, fill={rgb,255: red,221; green,255; blue,221}, tikzit draw=blue, tikzit category=zx]
\tikzstyle{X dot}=[Z dot, shape=circle, draw=black, fill={rgb,255: red,255; green,136; blue,136}, tikzit category=zx]
\tikzstyle{X phase dot}=[Z phase dot, tikzit shape=circle, tikzit draw=blue, fill={rgb,255: red,255; green,136; blue,136}, font={\footnotesize\boldmath}, tikzit category=zx]
\tikzstyle{hadamard}=[fill=yellow, draw=black, shape=rectangle, inner sep=0.6mm, minimum height=1.5mm, minimum width=1.5mm, tikzit category=zx]
\tikzstyle{Q Z dot}=[fill={rgb,255: red,221; green,255; blue,221}, draw=black, shape=circle, thick, tikzit category=zx, tikzit draw=red, minimum size=2mm, inner sep=0mm]
\tikzstyle{Q X dot}=[fill={rgb,255: red,255; green,136; blue,136}, draw=black, shape=circle, tikzit category=zx, tikzit draw=red, thick, inner sep=0mm, minimum size=2mm]
\tikzstyle{Q hadamard}=[fill=yellow, draw=black, thick, shape=rectangle, inner sep=0.6mm, minimum height=1.5mm, minimum width=1.5mm, tikzit category=zx, tikzit draw=red]
\tikzstyle{paulibox}=[fill={rgb,255: red,221; green,221; blue,255}, draw=black, shape=rectangle, inner sep=0.6mm, minimum height=5mm, minimum width=5mm, font={\footnotesize}, text height=1.5ex, text depth=0.25ex, tikzit category=zx]
\tikzstyle{vertex}=[inner sep=0mm, minimum size=1mm, shape=circle, draw=black, fill=black, tikzit category=misc]
\tikzstyle{vertex set}=[inner sep=0mm, minimum size=1mm, shape=circle, draw=black, fill=white, font={\footnotesize\boldmath}, tikzit category=misc]
\tikzstyle{small black dot}=[fill=black, draw=black, shape=circle, inner sep=0pt, minimum width=1.2mm, tikzit category=circuit]
\tikzstyle{cnot ctrl}=[fill=black, draw=black, shape=circle, inner sep=0pt, minimum width=1.2mm, tikzit category=circuit]
\tikzstyle{cnot targ}=[fill=white, draw=white, shape=circle, tikzit category=circuit, label={center:$\oplus$}, inner sep=0pt, minimum width=2.1mm, tikzit fill={rgb,255: red,102; green,204; blue,255}, tikzit draw=black]
\tikzstyle{ket}=[fill=white, draw=black, shape=regular polygon, regular polygon sides=3, regular polygon rotate=-30, scale=0.7, inner sep=1pt, tikzit category=circuit, tikzit shape=rectangle, tikzit fill=green]
\tikzstyle{Q ket}=[fill=white, draw=black, thick, shape=regular polygon, regular polygon sides=3, regular polygon rotate=-30, scale=0.7, inner sep=1pt, tikzit category=circuit, tikzit shape=rectangle, tikzit fill=green, tikzit draw=red]
\tikzstyle{bra}=[fill=white, draw=black, shape=regular polygon, regular polygon sides=3, regular polygon rotate=30, scale=0.7, inner sep=1pt, tikzit category=circuit, tikzit shape=rectangle, tikzit fill=red]
\tikzstyle{scalar}=[shape=rectangle, text height=1.5ex, text depth=0.25ex, yshift=0.5mm, fill=white, draw=black, minimum height=5mm, yshift=-0.5mm, minimum width=5mm, font={\small}]
\tikzstyle{clabel}=[fill=white, draw=none, shape=rectangle, tikzit fill={rgb,255: red,56; green,255; blue,242}, font={\footnotesize}, inner sep=1pt, tikzit category=labels]
\tikzstyle{empty diagram}=[draw={gray!40!white}, dashed, shape=rectangle, minimum width=1cm, minimum height=1cm, tikzit category=misc]
\tikzstyle{amap}=[fill=white, draw=black, shape=NEbox, tikzit category=asymmetric, tikzit fill=yellow, tikzit shape=rectangle]
\tikzstyle{amap conj}=[fill=white, draw=black, shape=NWbox, tikzit category=asymmetric, tikzit fill=green, tikzit shape=rectangle]
\tikzstyle{amap adj}=[fill=white, draw=black, shape=SEbox, tikzit category=asymmetric, tikzit fill=red, tikzit shape=rectangle]
\tikzstyle{amap trans}=[fill=white, draw=black, shape=SWbox, tikzit category=asymmetric, tikzit fill=orange, tikzit shape=rectangle]
\tikzstyle{astate}=[fill=white, draw=black, shape=NEtriangle, tikzit category=asymmetric, tikzit shape=circle, tikzit fill=yellow]
\tikzstyle{astate conj}=[fill=white, draw=black, shape=NWtriangle, tikzit category=asymmetric, tikzit shape=circle, tikzit fill=green]
\tikzstyle{astate adj}=[fill=white, draw=black, shape=SEtriangle, tikzit category=asymmetric, tikzit shape=circle, tikzit fill=red]
\tikzstyle{astate trans}=[fill=white, draw=black, shape=SWtriangle, tikzit category=asymmetric, tikzit shape=circle, tikzit fill=orange]
\tikzstyle{white dot}=[inner sep=0mm, minimum size=2mm, shape=circle, draw=black, fill={rgb,255: red,250; green,250; blue,250}]
\tikzstyle{white phase dot}=[minimum size=5mm, font={\footnotesize\boldmath}, shape=rectangle, rounded corners=2mm, inner sep=0.2mm, outer sep=-2mm, scale=0.8, tikzit shape=circle, draw=black, fill={rgb,255: red,250; green,250; blue,250}, tikzit draw=blue]
\tikzstyle{hbox}=[shape=rectangle, text height=2mm, fill={rgb,255: red,255; green,235; blue,61}, draw=black, minimum height=2mm, minimum width=2mm, font={\small}, tikzit category=zh, inner sep=0pt, rounded corners=0.5mm]
\tikzstyle{Z dot (zh)}=[inner sep=0mm, minimum size=2mm, shape=circle, draw=black, fill={rgb,255: red,250; green,250; blue,250}, tikzit category=zh]
\tikzstyle{X dot (zh)}=[Z dot, shape=circle, draw=black, fill={rgb,255: red,193; green,193; blue,193}, tikzit category=zh]
\tikzstyle{triangle}=[fill=yellow, draw=black, shape=isosceles triangle, isosceles triangle apex angle=60, minimum size=2.5mm, inner sep=0mm]
\tikzstyle{labelled hbox}=[shape=rectangle, text height=1.75ex, text depth=0.5ex, fill={rgb,255: red,255; green,235; blue,61}, draw=black, minimum height=3mm, minimum width=4mm, font={\small}, tikzit category=zh, inner sep=1.3pt, rounded corners=0.5mm]
\tikzstyle{Z phase dot (zh)}=[Z phase dot, tikzit shape=circle, tikzit draw=blue, fill={rgb,255: red,250; green,250; blue,250}, font={\footnotesize\boldmath}, tikzit category=zh]
\tikzstyle{X phase dot (zh)}=[Z phase dot, tikzit shape=circle, tikzit draw=blue, fill={rgb,255: red,193; green,193; blue,193}, font={\footnotesize\boldmath}, tikzit category=zh]
\tikzstyle{W node}=[fill=black, draw=black, shape=regular polygon, regular polygon sides=3, minimum size=2mm]
\tikzstyle{Z dot (zw)}=[fill=white, draw=black, shape=circle, minimum width=1.2mm, inner sep=0pt]
\tikzstyle{W tri}=[fill=black, draw=black, shape=isosceles triangle, inner sep=0pt, minimum size=3.5mm, isosceles triangle apex angle=60, rotate=180]
\tikzstyle{hadamard edge}=[-, dashed, dash pattern=on 2pt off 0.5pt, thick, draw={rgb,255: red,68; green,136; blue,255}]
\tikzstyle{Q edge}=[-, thick, tikzit draw=red]
\tikzstyle{box edge}=[-, dashed, dash pattern=on 2pt off 0.5pt, thick, draw={rgb,255: red,203; green,192; blue,225}]
\tikzstyle{brace edge}=[-, tikzit draw=blue, decorate, decoration={brace,amplitude=1mm,raise=-1mm}]
\tikzstyle{diredge}=[->, thick]
\tikzstyle{double edge}=[-, double, shorten <=-1mm, shorten >=-1mm, double distance=2pt]
\tikzstyle{gray edge}=[-, {gray!60!white}]
\tikzstyle{pointer edge}=[->, very thick, gray]
\tikzstyle{boldedge}=[-, line width=1.6pt, shorten <=-0.17mm, shorten >=-0.17mm]
\tikzstyle{bidir edge}=[<->, very thick, draw={rgb,255: red,191; green,191; blue,191}]
\tikzstyle{purple edge}=[->, thick, draw={rgb,255: red,225; green,117; blue,216}]
\tikzstyle{green edge}=[->, thick, draw={rgb,255: red,167; green,231; blue,137}]
\tikzstyle{orange edge}=[->, thick, draw={rgb,255: red,245; green,170; blue,63}]
\tikzstyle{blue edge}=[->, thick, draw={rgb,255: red,68; green,136; blue,255}]
\tikzstyle{any edge}=[->, thick, draw=cyan]
\tikzstyle{red edge}=[->, thick, draw={rgb,255: red,255; green,136; blue,136}]
\tikzstyle{bidiredge}=[<->, thick]
\tikzstyle{dashed diredge}=[->, dashed, dash pattern=on 1pt off 0.5pt]
\tikzstyle{bidashed diredge}=[<->, dashed, dash pattern=on 1pt off 0.5pt]
\renewcommand{\tr}[1]{\operatorname{tr}\left(#1\right)}
\newcommand{\ptr}[2]{\operatorname{tr}_{#1}\left(#2\right)}
\newtheorem*{rep@theorem}{\rep@title}
\newcommand{\newreptheorem}[2]{%
\newenvironment{rep#1}[1]{%
 \def\rep@title{#2 \ref{##1}}%
 \begin{rep@theorem}}%
 {\end{rep@theorem}}}
\newtheorem*{proposition}{Proposition}
\newtheorem{definition}{Definition}
\newtheorem{theorem}{Theorem}
\newcommand{\figscale}{1.0}
\newcommand{\blockfig}[1]{$$\scalebox{\figscale}{\includegraphics{figs/#1.pdf}}$$}
\title{Quantum Algorithms for Compositional Text Processing}
\author{
    Tuomas Laakkonen, Konstantinos Meichanetzidis, Bob Coecke
    \institute{Quantinuum, 17 Beaumont Street, Oxford OX1 2NA, United Kingdom}
    \email{\{tuomas.laakkonen, k.mei, bob.coecke\}@quantinuum.com}
}
\begin{document}

\maketitle

\begin{abstract}
    Quantum computing and AI have found a fruitful intersection in the field of natural language processing. We focus on the recently proposed DisCoCirc framework for natural language, and propose a quantum adaptation, QDisCoCirc. This is motivated by a compositional approach to rendering AI interpretable: the behavior of the whole can be understood in terms of the behavior of parts, and the way they are put together. 

    For the model-native primitive operation of text similarity, we derive quantum algorithms for fault-tolerant quantum computers to solve the task of  question-answering within QDisCoCirc, and show that this is BQP-hard; note that we do not consider the complexity of question-answering in other natural language processing models. Assuming widely-held conjectures, implementing the proposed model classically would require super-polynomial resources. Therefore, it could provide a meaningful demonstration of the power of practical quantum processors.
    
    The model construction builds on previous work in compositional quantum natural language processing.  Word embeddings are encoded as parameterized quantum circuits, and compositionality here means that the quantum circuits compose according to the linguistic structure of the text. We outline a method for evaluating the model on near-term quantum processors, and elsewhere we report on a recent implementation of this on quantum hardware.  
    
    In addition, we adapt a quantum algorithm for the closest vector problem to obtain a Grover-like speedup in the fault-tolerant regime for our model. This provides an unconditional quadratic speedup over any classical algorithm in certain circumstances, which we will verify empirically in future work.
\end{abstract}

\section{Introduction}
\label{sec:intro}

%QC and AI

Artificial intelligence permeates a wide range of areas of activity, from academia to industrial applications, with natural language processing (NLP) standing center stage. In parallel, quantum computing has seen a surge in development, and practical quantum processors are reaching scales where small-scale quantum algorithms are becoming feasible. The merging of these two fields has given rise to the young area of research on quantum natural language processing (QNLP).  Here we focus on the `first wave' of QNLP, starting around 2016 \cite{WillC, QNLP-foundations, GrammarAwareSentenceClassification, lorenz2021qnlp, kartsaklis2021lambeq}, although there are also other approaches \cite{widdows2023nearterm, widdows2024natural}.  One important feature that distinguishes that first wave from other work, and from the broad field of quantum machine learning \cite{dunjko2017machine,Benedetti_2019,Schuld2014}, is that it provides a path towards \emph{explainability} and \emph{interpretability}.

%interpretability in AI

Despite all of the impressive advancements of contemporary artificial intelligence  -- which is synonymous with machine learning methods -- these advancements have been achieved by training black-box deep-learning models on large amounts of data. In order to arrive at general applicability and high performance, what is often compromised by such setups, at least in terms of model architecture, is exactly explainability and interpretability.  When things go wrong unexpectedly, you typically don't know why.  Therefore, several attempts have been put forward recently to render black-box deep-learning models more explainable and interpretable. For example, methods for \emph{post-hoc} mechanistic interpretability have recently formed an active area of research \cite{bricken2023monosemanticity}. Conversely, first-wave QNLP made use of an earlier quantum-inspired model for language \cite{CSC, teleling} that had been crafted precisely with the features of explainability and interpretability in mind. These features were achieved through \emph{compositionality}.

%compositionality

For us, compositionality means that the behavior of the whole can be understood in terms of the behavior of parts, and the way they are put together \cite{compositionality}.  In the case of language this includes linguistic meaning as well as linguistic structure such as grammar and coreference. An early compositional framework for natural language that combined linguistic meaning and linguistic structure was DisCoCat \cite{CSC, FrobMeanI, GrefSadr, KartSadr}. Somewhat surprisingly, this was enabled by the foundational research program of categorical quantum mechanics \cite{AC1, CKbook}: DisCoCat emerged from the observation that the category-theoretic structure of categorical quantum mechanics perfectly matched Lambek's model of linguistic structure in terms of pregroups \cite{Lambek1, LambekBook}. In this way, linguistic structure mediates how word-meanings interact in order to form phrase- and sentence-meanings -- see \cite{teleling, QNLP-foundations} for more details. Just as it is the case in categorical quantum mechanics, sentences in DisCoCat take the form of diagrams:
\blockfig{discocat-example}
where the boxes capture word-meanings and the wires represent the grammatical interactions between those word-meanings, yielding the meaning of the sentence as a whole. The fact that DisCoCat essentially was derived from a quantum mechanical formalism, enabled one to use it to easily craft quantum NLP models, which we will refer to as QDisCoCat \cite{WillC, QPL-QNLP, QNLP-foundations}. This was the foundation of the `first wave' of QNLP.

% QDisCoCat experiments and how they do not scale due to postselection

QDisCoCat models have been implemented on quantum hardware for the task of sentence classification \cite{GrammarAwareSentenceClassification,lorenz2021qnlp, kartsaklis2021lambeq}, but they have some shortcomings. Firstly, QDisCoCat diagrams have to be converted into quantum circuits in order to fit them on a quantum computer, and consequently, quantum algorithms derived from them require post-selection -- no lower bounds are known for the post-selection probability, so these algorithms may require exponential time in the worst case.\footnote{Sample-efficient sentence-level QNLP models can be constructed \cite{harvey2023sequence}, but they are not compositional in the above sense.} A related issue is that the reliance of DisCoCat on `categorial grammars' \cite{Lambek0, Lambek1}, which cannot natively represent texts larger than individual sentences. Both problems are addressed in the DisCoCirc framework \cite{CoeckeText, wang2023distilling}, where sentences are represented by circuits, and these circuits can be further composed to represent texts. We call these \emph{text circuits}. This bypasses the need for conversion of diagrams into circuits, making post-selection unnecessary.  Besides these, DisCoCirc enjoys a number other advantages that are discussed in \cite{wang2023distilling}, and has been studied within a wide variety of contexts \cite{duneau_solving_2021, rodatz2021conversational}.
%what this work is about and what the contributions are: introducing QDisCoCirc and qualgos for some tasks native to the model

In this work, we propose QDisCoCirc, an adaptation of DisCoCirc which provides \emph{efficient quantum algorithms} for several NLP tasks within the model. In particular, in Section \ref{sec:hardness} we show that there exists an NLP task, namely question-answering, which is BQP-hard to solve within the QDisCoCirc framework. Therefore, evaluating this model can provide a demonstration of the power of practical quantum processors, as simulating the model classically would require super-polynomial resources (assuming widely-held hardness conjectures). Furthermore, this would be less abstract compared to experiments such as those based on random circuit sampling \cite{arute_quantum_2019, decross2024computational}, which are convincing demonstrations of quantum hardware, but do not attempt to solve a widely applicable task. Importantly, we do not claim that the underlying NLP task of question-answering cannot be solved efficiently with a classical machine, only that the QDisCoCirc model in particular would be hard to evaluate in this case. 

We believe that QDisCoCirc models are worth considering because of the interpretability and explainability they afford. However, other concrete implementations of the DisCoCirc framework could also be proposed -- for example, based on classical neural networks -- and in any quantum model, it is necessary to justify why it should be preferred over a classical model, given the extreme difficulty of constructing quantum devices. We argue that a quantum model is most natural for this framework. DisCoCirc, via DisCoCat, is ultimately derived from pregroup grammars \cite{LambekBook}, where the composition of spaces is their tensor product. This has been argued as essential to fully capture the semantics of language \cite{GrefSadr,CSC}. In fact, we have the following result:

\begin{proposition}[{Informal, \cite[Theorem 2.1]{coecke2018uniqueness} \& \cite[Theorem 5.49]{CKbook}}]
    If a model has the same abstract structure as pregroup grammars, composition of spaces must be analogous to the tensor product.
\end{proposition}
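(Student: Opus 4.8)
The plan is first to make the informal statement precise, since as written it abstracts over several design choices. I read ``same abstract structure as pregroup grammars'' as the requirement that the semantic model be a structure-preserving functor from the free rigid (autonomous) monoidal category generated by the grammatical types into a target category of meaning spaces, in such a way that the grammatical reductions of a pregroup are interpreted by the cups and caps of a compact closed structure. Under this reading the ``composition of spaces'' is exactly the monoidal product $\otimes$ of the target category, and the claim becomes: any monoidal product that supports this compact (cup/cap) structure in a non-degenerate way must be the tensor product, rather than, say, the direct sum $\oplus$.

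With the statement framed this way, the argument reduces to the two cited results. First I would recall that a pregroup, viewed as a one-object posetal category, is the prototypical rigid structure: every type $a$ has a left and right adjoint, and the defining inequalities $a^l a \le 1$ and $a a^r \le 1$ are precisely the abstract contractions, while $1 \le a a^l$ and $1 \le a^r a$ are the expansions. A semantic model must send these to the evaluation and coevaluation maps of a compact closed category, so the target is forced to be (at least) autonomous monoidal. I would then invoke \cite[Theorem 2.1]{coecke2018uniqueness}, which axiomatizes an abstract notion of composition by a short list of linguistically and physically motivated desiderata and proves that the tensor product is the \emph{unique} operation satisfying them; in particular the direct sum is excluded. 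The process-theoretic counterpart \cite[Theorem 5.49]{CKbook} supplies the dagger-compact version, confirming that once one insists on the compact structure the tensor is the only admissible choice in the relevant category of meaning spaces.

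The main obstacle is isolating the right non-degeneracy axiom: it must be weak enough to hold for any honest model ``with the same abstract structure as pregroup grammars,'' yet strong enough to rule out the direct sum. The crux is that for cups and caps to be non-degenerate the dimension of a composite system must \emph{multiply} rather than \emph{add}, which is exactly what distinguishes $\otimes$ from $\oplus$: a cap $\mathbb{C} \to V \otimes V^*$ witnessing $\dim(V)^2$ independent correlations has no analogue for $\oplus$, where the ``joint'' system is merely the disjoint union of its parts and carries no entanglement. I would therefore verify explicitly that the pregroup reductions are faithfully representable under $\otimes$ but collapse under $\oplus$, and then check that the axioms of the cited theorems are indeed satisfied by the pregroup-grammar model, so that their uniqueness conclusions apply. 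A fully rigorous statement would also need to fix the target category (finite-dimensional vector spaces, or Hilbert spaces in the quantum case) and confirm that it admits no exotic monoidal structure interpolating between these two extremes.
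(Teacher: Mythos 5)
Your proposal is correct and matches the paper's treatment: the paper offers no independent proof of this (explicitly informal) proposition, but simply defers to the cited uniqueness results, noting that ``this can be made precise in several ways'' in \cite{coecke2018uniqueness, CKbook}. Your formalization via a structure-preserving functor from the free rigid monoidal category, with the uniqueness of $\otimes$ supplied by the two cited theorems, is exactly the intended reading, so there is nothing to compare beyond the extra (sensible) detail you add about non-degeneracy.
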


This can be made precise in several ways, as discussed in \cite{coecke2018uniqueness, CKbook}. This fixes the compositional structure of our model, but leaves the underlying Hilbert space undefined. As is customary in machine learning, we take this to be numerical vectors with the usual dot product, and in this case these models become tensor networks. Since large general tensor networks are not feasible to evaluate, both for quantum and classical devices, we must pick a subset which can be efficiently evaluated. We choose to base our model on quantum circuits, which represent the largest such subset currently known \cite{bernstein1997quantum}.

Finally, in Section \ref{sec:grover} we give an algorithm that achieves for QDisCoCirc a Grover-like quadratic speedup over any classical algorithm for tasks such as text similarity, under certain technical assumptions about the dataset which can be verified empirically. This algorithm improves over the equivalent algorithm for QDisCoCat \cite{WillC}, which considers a technological regime where fault-tolerant scalable quantum computers and large quantum random access memories (QRAM) are available, while this is unlikely to be feasible in the near term. In contrast, our algorithm uses exponentially fewer bits of QRAM. Additionally, in Section \ref{sec:qdiscocirc} we present algorithms that do not use a QRAM, but are quadratically slower than the algorithm given in Section \ref{sec:grover}. They require much fewer quantum resources, and so are practical to implement on near-term machines. In Section \ref{sec:qdiscocirc}, we show how text circuits can be mapped into parameterized quantum circuits to realize these algorithms.

% \par\bigskip\noindent
% \bob{In summary, here is the motivation behind the quantum algorithms that we develop in this paper:
% \begin{enumerate}
% \item With regard to our QDisCoCirc-specific algorithm for question-answering:
% \begin{enumerate}
% \item Our precondition is that we desire AI, and NLP in particular, to be interpretable.
% \item We base this interpretability on compositional structure, guided by (1) the inherent nature of linguistic structure to be compositional, and (2) because of the unreasonable success of compositionality in computer science more broadly, e.g.~programming.
% \item Within linguistics, this compositional structure is made up of abstract tensors, which when augmented with numerical vectors encoding meanings, become concrete tensor products, so that linguistic diagrams become tensor networks.
% \item For reasons of efficiency, this forces us to consider quantum circuits, which canonically go hand-in-hand with the QDisCoCirc model for language.
% \item Running these circuits on hardware yields super-polynomial quantum advantage, with respect to the generic task of question-answering.
% \end{enumerate}
% \item  With regard to our unconditional algorithm for sentence similarity
% \begin{enumerate}
% \item \begin{color}{red}Tuomas...\end{color}
% \item \begin{color}{red}...\end{color}
% \end{enumerate}
% \end{enumerate}
% \begin{color}{red}[WE NEED TO MAKE SURE, IF WE KEEP THIS, THAT IN THE FINAL FORM EVERYTHING SAID HERE IS SUFFICIENTLY ACCOUNTED FOR ABOVE, IN MATCHING TERMINOLOGY]\end{color}
% }

\section{DisCoCirc}
\label{sec:discocirc}

% \kon{\bf We can justify the existence of inverses and discards as existing in a process theory. And since we are treating language as a process theory [Bob-math of text structure], they are by default there. Process Theory = Symmetric Monoidal Category (SMC).
% }

% \kon{\bf To add caps and cups, ie make the SMC also compact closed, we justify as: some tasks have a natural representation as a *single diagram*. Here we show character-independent character arc, but in general there could be more.
% The existence of caps and cups justifies going to vector spaces, a special case of which is quantum theory. Quantum computing then motivates the construction of QDisCoCirc models.}

In the DisCoCirc framework, texts are the result of the composition of sentences, analogously to how sentences are compositions of words. Texts are represented by text circuits, which are read top to bottom. Boxes represent processes, and every box has input wires and output wires. In such circuits, only connectivity matters and so the information flow is explicit in the circuit.
A circuit is composed of \emph{generators}, or atomic processes, each of which has a linguistic interpretation -- see \cite{liu2023pipeline} for more information.
Text circuits are composed of the following set of generators, which we call \emph{states}, \emph{effects}, \emph{boxes}, and \emph{frames}:
\blockfig{generator-defs}
Generators can be composed -- sequential composition of processes can be done by connecting the output wires of one process with the input wires of another process, and parallel composition is performed by placing boxes side by side.  An effect is to be understood as a `test', or the inverse, of a state. Applying an effect to a state, by joining all of their wires, results in a circuit with no open wires, which we call a \emph{scalar}. 

In this approach to modeling text as circuits, nouns are `first class citizens', and are represented as states. The idea is that noun states are carried along wires, and so all wires here carry the same type, the noun type. Boxes are processes that transform states. Examples of these are adjectives or intransitive and transitive verbs, that act on nouns and pairs of nouns respectively. In general, generators with compatible shapes can be serially composed:
\blockfig{nouns-verbs-example}
We also have some special generators -- the identity, the swap, and the discard effect:
\blockfig{cap-cup-swap-discard}
These interact with each other in a well-behaved way \cite{CKbook}, so that \emph{only connectivity matters} -- we can bend and move wires freely so that equality of circuits is considered up to connectivity between inputs and outputs of generators. We consider any circuit to be state, effect, box, or frame if it has a compatible pattern of input and output wires. Additionally, we associate each generator with an inverse, which when composed serially yields the identity. Inverses for composite circuits are constructed by mirroring the whole circuit vertically and replacing every individual generator with its inverse.

Frames are super-maps that transform boxes, for example, intensifiers that act on adjectives, adverbs that act on verbs, or conjunctions that act on phrases:
\blockfig{frame-box-examples}
Note that not all wires may occur in all the holes in each frame. In this case, the interior wires can be labeled with exterior wires to which they correspond (note that this need not be a single wire). Frames can also be serially composed:
\blockfig{frame-compose-example}
The composition of the generators of text circuits is not free; not every composition is allowed -- for instance, wires may not pass through frames. Compositions are restricted in the sense that text circuit are the output of a text-to-circuit \emph{parser} \cite{liu2023pipeline}, which for this work we assume as given -- the parser generates parse trees for each sentence that are converted to text circuits, and the wires corresponding to common nouns are connected using coreference resolution so that generators concerning the same nouns throughout the text act on the same wires in the circuit. Arbitrary text circuits are \emph{local}, as the number of input and output wires of every box is upper bounded by a constant that is independent of the text size. Furthermore, we assume that text circuits derived from a text via the parser are acyclic.

\subsection{Native NLP Tasks}
\label{sec:native-tasks}

%tasks as diagrams

The range of NLP tasks is wide and diverse. Here, we consider tasks that are \emph{native} to DisCoCirc. That is, we formulate simple NLP tasks using the compositional structure of the model.
The key idea is that text circuits constructed by the parser pipeline are states which carry the information of the text. Then this information is retrieved by constructing appropriate \emph{tests} in terms of effects. In general, the effects are constructed as (inverses of) text circuits themselves. The primitive operation that we will use to define NLP tasks is that of \emph{similarity}, or overlap, which is tested by serially composing effects onto states. In this way, we can represent each task as a single, scalar, text circuit.

%global: textsim :: local: Q%A

First we consider the text-level analogue of sentence similarity as presented in the DisCoCat framework \cite{WillC}, ie \emph{text similarity}. Suppose $T_1$ and $T_2$ are text circuits defining states obtained from the parser, defined on the wires generated by same set of nouns $N$. We form a scalar text circuit corresponding to their similarity by serially composing one with the inverse of the other. For example: 
\blockfig{text-similarity-simple}
In the case that the circuits are not defined on the same set of nouns, we can still compute the similarity. Suppose $T_1$ and $T_2$ be defined on the sets of nouns $N_1$ and $N_2$. Let $U_1, U_2, \dots, U_{|N_2 \setminus N_1|}$ be the noun states present in $T_2$ but not $T_1$, and similarly $V_1, V_2, \dots, V_{|N_1 \setminus N_2|}$ be the noun states present in $T_1$ but not $T_2$. By parallel composing $T_1$ with $U_i$ and $T_2$ with $V_i$, we form two text circuits that are defined on the same wires.  For example:
\blockfig{text-similarity}
While this task compares the whole of both texts, we can extract more specific information about a text by only comparing against a subset of the nouns that we wish to know about. We call this task \emph{question answering}. In this task, we want to ask a question about a subset of the nouns in the text. To do this, we formulate the question as a statement (for example, `Is Alice home?' becomes `Alice is home.') and then we compute the similarity of the text with the inverse of the question in the same way as above, except that we discard the noun wires in the text that are not in the subset used by the question. For example:
\blockfig{question-answering}
In the case of questions which have several different possible answers, we formulate such a question statement and thus text circuit for each possible answer. For example, `Is Alice home?' becomes a pair of question statements `Alice is home.' and `Alice is not home.' Then we compare the scalar text circuits formed in each case to determine the answer to the question.

In Section \ref{sec:other-tasks}, we show a collection of other model-native NLP tasks that enjoy a natural representation as text circuits.
In general, DisCoCirc itself does not prescribe a particular semantics for any of these tasks (for instance, we leave the way that scalar circuits are computed or compared unspecified) --- it provides a method for constructing the text circuits representing the input texts, as well as an intuitive framework for designing circuits corresponding to a particular task. Moreover, once an implementation of the model is decided, the word-embeddings that form the boxes will be derived to solve a specific task. The aim of DisCoCirc is only to incorporate linguistic structure into the model, so it is not imperative that these task-circuits are defined rigorously (for example, as a statement of discourse representation theory), but rather that they are `intuitively correct' so that the imparted structure is helpful in practice. In the next section, we will construct a specific implementation of DisCoCirc that fills in these details.

%the important task to the work, which uses primitives introduced above

\section{QDisCoCirc Models}
\label{sec:qdiscocirc}

%we make qdiscocirc models by applying a functor to text diagrams

To construct a concrete DisCoCirc model that performs a model-native NLP task for any potential text and task, we apply a structure-preserving map to the corresponding text circuit. This is achieved by defining a map that is applied to every generator individually, in a way that respects the relations defined on the generators. As discussed in the introduction, a natural choice of space in which such circuits may exist is tensor networks. As tensor networks are hard to evaluate in general, we propose QDisCoCirc, a family of models based on quantum circuits, which are a restricted family of tensor networks, and can be efficiently evaluated on a quantum computer.

In QDisCoCirc, text circuits are translated to quantum circuits with postselections by replacing each generator with a quantum operation or state and each wire with some fixed number of qubits. Furthermore, the inverses of generators are given by their adjoints. The generators are mapped as follows:
\blockfig{qdisco-generators}
The specific form of supermaps obtained via this mapping will be discussed shortly. We identify post-selected quantum circuit diagrams with the density matrices they represent (which are not normalized in general); thus, scalar text circuits are indeed transformed to scalars representing the probability of successfully post-selecting the given circuit.

To solve the model-native tasks defined above using this model, we observe that all the scalar text circuits for the tasks take the form of $A$ serially composed with $B^{-1}$ for some circuits $A$ and $B$ (this is true also for the tasks presented in Section \ref{sec:other-tasks}). This corresponds to a text similarity operation.
Through the quantum semantic functor, this operation maps to quantum state overlap. According to the Born rule, these result to expressions of the form $F(\Psi, \Phi) = |\langle \Psi \mid \Phi \rangle|^2$ for some quantum states $\Psi$ and $\Phi$ in the pure case, and in the general case $F(\rho, \sigma) = \tr{\rho\sigma}$ for density matrices $\rho$ and $\sigma$.

However, in the case that both $\rho$ and $\sigma$ are mixed, this does not satisfy the properties we would expect from a similarity operation, such as $F(\rho, \rho) = 1$ and $F(\rho, \sigma) < 1$ if $\rho \neq \sigma$. There are a few quantities \cite{liang2019quantum} such as the Uhlmann-Josza fidelity $\tr{\sqrt{\sqrt{\rho}\sigma\sqrt{\rho}}}^2$ which do satisfy these properties for mixed states, but unfortunately they cannot be computed efficiently using a quantum computer. Therefore, we will ensure text circuits obtained from the parser pipeline are translated to pure quantum states, so that for the tasks defined previously, \emph{at most one} of the states representing $A$ and $B$ is mixed.

Since the parser does not generate discards, we just need to make sure none of the generators are mapped to mixed operations. This is trivially true for states, effects, and boxes, but frames are more problematic.
Since states map to quantum states and boxes to unitaries, frames become maps between unitaries -- supermaps -- by analogy. However, some frames in text circuits may contain more wires for one of their arguments than are input and output to the frame -- this arises in many cases, for instance in reflexive structures. Furthermore, general quantum supermaps can be written as pairs of unitaries conjugating their arguments, along with a side channel consisting of some auxiliary qubits. In either case, we cannot use auxiliary qubits or wires because these would then be subsequently discarded, making the state mixed. Therefore, we instead constrain define our mapping from text circuits such that these situations do not occur, see Appendix \ref{sec:frame-decomp} for details.

\subsection{Solving Model-Native Tasks}
\label{subsec:qdisco-native}

After decomposing the frames in a text circuit, the resulting QDisCoCirc quantum circuit will contain only quantum states and unitaries labeled with the words from which they originated. For each task, we assume we are given these states and unitaries as quantum circuits, which we call \emph{quantum word embeddings}, such that the circuits defined for the task are evaluated as expected. We will consider in future work how to pick word embeddings such that this is the case. In particular, for text-similarity tasks, that the degree of semantic similarity between the two texts is indicated by $F(\rho, \sigma) = \tr{\rho\sigma}$. We can solve tasks based on this primitive using the \emph{swap test} \cite{buhrman2001quantum}:

\begin{theorem}
    The text-similarity primitive of QDisCoCirc models between two text circuits $A$ and $B$ can be approximated to precision $\epsilon$ in polynomial time on a quantum computer. In particular, given $\epsilon > 0$ and $\rho$, $\sigma$ the quantum states derived from $A$ and $B$ respectively, the quantity $\tr{\rho\sigma}$ can be determined to precision $\epsilon$ with failure probability $\delta$ by a quantum circuit of size
    $$O\left(\frac{(|A| + |B|)\log(\frac{1}{\delta})}{\epsilon^2}\right)$$
    where $|A|$ and $|B|$ are the number of elementary quantum gates comprising $\rho$ and $\sigma$.
\end{theorem}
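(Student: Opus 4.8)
The plan is to realize the text-similarity primitive $\tr{\rho\sigma}$ directly via the swap test and then control the estimation error by repetition. Recall that the swap test takes an ancilla qubit initialized to $\ket{0}$ together with the two state registers holding $\rho$ and $\sigma$; one applies a Hadamard to the ancilla, a controlled-SWAP between the two registers conditioned on the ancilla, a second Hadamard, and finally measures the ancilla in the computational basis. The standard calculation, using the identity $\tr{\mathrm{SWAP}\,(\rho \otimes \sigma)} = \tr{\rho\sigma}$, shows that the probability of obtaining outcome $0$ is
$$p = \frac{1}{2}\left(1 + \tr{\rho\sigma}\right).$$
Crucially this holds verbatim for mixed states, not just pure ones, so the estimation works for arbitrary $\rho,\sigma$; the earlier restriction that at most one state be mixed is a requirement for $\tr{\rho\sigma}$ to behave as a good \emph{similarity}, not a constraint on the feasibility of this algorithm.

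A single run thus produces a Bernoulli random variable with mean $p$, from which $\tr{\rho\sigma} = 2p - 1$. First I would prepare $\rho$ and $\sigma$ by executing the circuits underlying $A$ and $B$, of sizes $|A|$ and $|B|$, append the ancilla, and perform the controlled-SWAP, which costs one controlled-SWAP gate per qubit and hence $O(|A|+|B|)$ additional gates, since the number of qubits is bounded by the circuit size; one run therefore has size $O(|A|+|B|)$. Repeating the test $N$ times and forming the empirical frequency $\hat p$, Hoeffding's inequality gives $\Pr[\,|\hat p - p| \ge \epsilon/2\,] \le 2\exp(-N\epsilon^2/2)$, so choosing $N = O\!\left(\log(1/\delta)/\epsilon^2\right)$ makes the estimate $2\hat p - 1$ of $\tr{\rho\sigma}$ accurate to within $\epsilon$ with failure probability at most $\delta$. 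Multiplying the per-run size by $N$ yields the claimed total size $O\!\left((|A|+|B|)\log(1/\delta)/\epsilon^2\right)$.

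The conceptual content here is light, since both the swap test and Hoeffding's inequality are standard, so the main obstacle will be bookkeeping rather than any genuine difficulty. The point deserving care is the mixed state arising from discarded wires: when (say) $\rho$ is mixed, its preparation circuit produces a pure state on a larger register from which some qubits are traced out, and I would verify that the controlled-SWAP is applied only to the retained qubits of each register -- which must have matching dimension, a fact guaranteed by the earlier padding of $A$ and $B$ onto a common set of noun wires -- with the traced-out qubits simply left unmeasured, so that the measured overlap is exactly $\tr{\rho\sigma}$ for the reduced states. The only remaining item is to confirm the constant in the exponent of the Hoeffding bound, which pins down the precise $\log(1/\delta)/\epsilon^2$ scaling; neither introduces any obstruction beyond routine verification.
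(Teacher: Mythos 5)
Your proposal is correct and follows essentially the same route as the paper's proof: a swap-test circuit of size $O(|A|+|B|)$ whose ancilla-measurement probability is $\frac{1}{2} + \frac{1}{2}\tr{\rho\sigma}$, combined with a Hoeffding bound to show $O(\log(1/\delta)\epsilon^{-2})$ repetitions suffice. The additional care you take with mixed states and register matching is sound but not needed beyond what the paper's one-line argument already assumes.
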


\begin{proof}
    We construct a quantum circuit based on the swap test \cite[Figure 1]{buhrman2001quantum}. This gives a circuit of size $O(|A| + |B|)$ for which the first qubit has measurement probability $\frac{1}{2} + \frac{\tr{\rho\sigma}}{2}$. By a Hoeffding bound, $O(\log(\frac{1}{\delta})\epsilon^{-2})$ samples of this circuit suffice to estimate $\tr{\rho\sigma}$ to precision $\epsilon$ with failure probability $\delta$.
\end{proof}

To solve the question-answering task in the QDisCoCirc model, we are given a text $T$ and a set of questions $Q_i$ and wish to find the $Q_i$ which maximizes $\tr{\rho_T, \rho_{Q_i}}$. We can do this by evaluating $\tr{\rho_T, \rho_{Q_i}}$ for each $i$ using the method above, and then taking the maximum of these. We show below that this also runs in polynomial time on a quantum computer. To compare this algorithm to classical algorithms, it is important to consider the scaling of $\epsilon$ carefully. The best-known classical method to calculate $\tr{\rho\sigma}$ would at least require performing statevector simulation of the quantum circuit. In this case, we would expect a total time of at least $O((|A| + |B|)N^n\operatorname{polylog}(\epsilon^{-1}))$ and memory requirement of $O(N^n\log(\epsilon^{-1}))$, where $\log(N)$ is the number of qubits per wire, and $n$ is the number of wires in the text circuits -- this is an exponentially worse dependency on $N$ and $n$. 

On the other hand, the scaling in terms of $\epsilon$ is much better in the classical case. For instance, consider question answering. If the number of nouns referenced in the question is constant $c$, then only $c\log(N)$ qubits will not be discarded. In this case, if the quantum states corresponding to the questions are sufficiently uniformly distributed then $\epsilon = O(N^{-c})$ should suffice to resolve the maximum similarity, by an argument of Wiebe et al \cite{wiebe2015quantum}, which gives exponentially better dependence in $n$. However, considering the case of full text-text similarity with no discarding, the quantities $\tr{\rho\sigma}$ may grow as small as $N^{-n}$ in general, and so if small relative error is needed, $\epsilon = O(N^{-n})$ may be required, which makes the complexity at least as bad as the classical algorithm. Therefore, for any given dataset and word embeddings, it would be important to analyze the scaling of $\epsilon$ in terms of $N$ and $n$, either analytically or empirically.

\section{Hardness for Question Answering}
\label{sec:hardness}

% \kon{PROOFS IN APPENDIX, if possible all of them}

% \kon{REMOVE "Promise" from everywhere. Let's just use "BQP" and put a footnote with the pedantic explanations about how it's all PromiseBQP and even actually FBQP.}

We will now consider the problem of question answering in the QDisCoCirc framework more formally. In particular, we will show that for the appropriate choices of texts and word embeddings, question answering is BQP-hard, and that there is an efficient quantum algorithm to solve it. Note these results only apply to question answering strictly as specified by the QDisCoCirc framework, not as a general NLP task. We call this task \textsc{QDisCoCirc-QA} to emphasize the distinction. Proofs for the theorems in this section are given in Appendix \ref{sec:proofs}. 

%\todo{could we just put a full stop here and delete the rest of this paragraph?} - there may be efficient classical methods that perform as well as QDisCoCirc for practical datasets. That is to say, experimental implementation of QDisCoCirc-style question answering could provide a demonstration of quantum supremacy, but not a quantum \emph{computational} \todo{emphasized "computational", since the model comes with some interpretability motivation, too} advantage over any classical algorithm.

\begin{table}
    \centering
    \begin{tabular}{|r|ccc|}
        \hline
        \diagbox{Embeddings}{Texts} & \emph{Worst-Case} & \emph{Average-Case} & \emph{Typical} \\
        \hline
        \emph{Worst-Case} & BQP-hard (Thm.\ref{worstworst}) & BQP-hard (Thm.\ref{averagetexts}) & Empirical only \\
        \emph{Average-Case} & BQP-hard (Thm.\ref{haarrandom}) & Unknown & Empirical only \\
        \emph{Typical} & Testable (Thm.\ref{typicalembeddings}) & Unknown & Empirical only \\
        \hline
    \end{tabular}
    
    \caption{The hardness results shown in this paper. The embeddings axis refers to the choice of word embeddings $V$, whereas the texts axis refers to the choice of context $T$ and questions $\{Q_i\}$. `Average-case' hardness refers to a `natural' distribution - for embeddings this is the Haar distribution, whereas for texts we define a distribution. `Typical' hardness refers to texts and embeddings taken from actual human-generated data. Since typical texts are always bounded by definition, only empirical evidence can be given for hardness in this case.\vspace{-0.2cm}}
    \label{tab:hardnessresults}
\end{table}

\begin{definition}
    The problem \textup{\textsc{QDisCoCirc-QA}} is defined as follows: given a set of word embeddings $V$, a context text $T$, and a set of $k$ question texts $\{Q_i\}$, determine any $j$ such that
    $$ \left|\tr{\rho_T(\rho_{Q_j} \otimes I)} -  \max_{i}\tr{\rho_T(\rho_{Q_i} \otimes I)}\right| < \epsilon $$
    where $\rho_T = U_T\ket{0}\bra{0}U^\dagger_T$, $\rho_{Q_i} = U_{Q_i}\ket{0}\bra{0}U^\dagger_{Q_i}$, and $U_T, U_{Q_i}$ are the QDisCoCirc text circuits generated from $T$ and $Q_i$ respectively over $V$.
\end{definition}

\begin{reptheorem}{qabqp}
    \textup{\textsc{QDisCoCirc-QA}} can be solved on a quantum computer in time
    $$ O\left(\frac{k\log\left(\frac{k}{\delta}\right)|V|_w(|T| + \max_i|Q_i|)}{\epsilon^2} \right) $$
    with precision $\epsilon$ and failure probability $\delta$, where $|V|_w$ is the maximum number of elementary gates of any word embedding in $V$, and $|T|$, $|Q_i|$ are the number of elementary gates in $U_T$ and $U_{Q_i}$ respectively.
\end{reptheorem}

Note that if the separation between the values $\tr{\rho_T(\rho_{Q_i} \otimes I)}$ is smaller than $\epsilon$, then the formulation of \textsc{QDisCoCirc-QA} does not guarantee that the `correct' answer (in the sense of maximum similarity) will be returned. We assume that word embeddings exist such that $\epsilon$ need not be too small, as discussed in Section \ref{sec:qdiscocirc}.
Now we tackle the hardness results in the first column of Table \ref{tab:hardnessresults} - that is, we will show that for some sets of word embeddings $V$ it is possible to construct context texts $T$ and questions $\{Q_i\}$ for which \textsc{QDisCoCirc-QA} is BQP-hard.

\begin{reptheorem}{worstworst}
    Suppose that a set of word embeddings $V$ satisfies the following:
    \begin{enumerate}
            \item The operations of $V$ use one qubit for each input wire,
            \item $V$ contains arbitrarily many proper nouns,
            \item $V$ contains at least two adjectives that generate a dense subset of $SU(2)$,
            \item $V$ contains at least one transitive verb that is entangling
    \end{enumerate}
    then for any fixed $\epsilon < \frac{1}{7}$, \textup{\textsc{QDisCoCirc-QA}} is BQP-hard. 
\end{reptheorem}

Supposing that we have word embeddings with enough nouns (and thus enough qubits) to perform the reduction in the previous theorem, the particular unitaries in the word embeddings aren't important -- Haar-random word embeddings suffice with high probability. This shows that \textsc{QDisCoCirc-QA} is hard in the average-case over word embeddings.

\begin{reptheorem}{haarrandom}
    Given a set of word embeddings $V$, suppose that operations in $V$ are independent Haar-random unitaries. Then conditions three and four of Theorem \ref{worstworst} are almost surely satisfied for all word embeddings containing at least two adjectives and one transitive verb.
\end{reptheorem}

Moreover, suppose we are given a set of word embeddings (for instance, that can solve a particular task), then we give a method to test if \textsc{QDisCoCirc-QA} is hard in this case.

\begin{reptheorem}{typicalembeddings}
    Given a specific set of word embeddings $V$ it is possible to check numerically that the conditions of Theorem \ref{worstworst} are satisfied. 
\end{reptheorem}

To show average-case hardness of \textsc{QDisCoCirc-QA} in terms of texts (second column in Table \ref{tab:hardnessresults}), we will first define a distribution over QDisCoCirc circuits of a given size, and then adapt Theorem \ref{worstworst} to show that we can draw polynomially sized texts $T$, $\{Q_i\}$ from this distribution and construct a set of word embeddings $V$ such that solving the corresponding question answering instance is BQP-hard. Note that this distribution is not particularly representative of human-generated texts.

\begin{definition}
    For every size $k$, we define a distribution $\mathcal{D}_k$ of text circuits with $k$ unique states:
    \begin{itemize}
        \item There are no frames, and there are $\Omega(k^{\gamma})$ boxes with high probability for some $\gamma > 4$.
        \item The label of each box is drawn independently from a distribution of words that obey Heap's law and Zipf's law exactly.
        \item The number of inputs to each box is drawn independently from a distribution $A$ such that $P(A = 2)$ is nonzero and each input to a box is selected uniformly randomly from all possible wires.
    \end{itemize}
\end{definition}

We are now ready to show the reduction. However, to make this work we need the number of words that only occur once in a given text (\emph{hapax legomena}) to be bounded below by a polynomial in the total size of the text. Clearly, this cannot hold for all sizes for any finite vocabulary. For typically-sized texts, this assumption is justified statistically by Heap's law and Zipf's law \cite{Lin_2017,Sano2012} --- it has been shown that for large corpora, roughly half of the vocabulary are hapax legomena, and hence their number scales like $\propto N^k$ for some $k \approx 0.5$ \cite{baayen2001word}. However, since we are considering the limit of arbitrarily-sized texts which do not exist in practice (all human texts are bounded), we require that this behavior can be extrapolated.

\begin{reptheorem}{averagetexts}
    There exists a polynomial $f(m, n)$, such that given an oracle to solve \textup{\textsc{QDisCoCirc-QA}} instances with arbitrary word embeddings and text circuits drawn from $\mathcal{D}_{f(m, n)}$, we can perform arbitrary quantum computations with $m$ $2$-local gates on $n$ qubits with high probability in polynomial time. That is, \textup{\textsc{QDisCoCirc-QA}} for worst-case word embeddings is average-case BQP-hard over texts drawn from $\mathcal{D}_{f(m, n)}$.
\end{reptheorem}

Since \textsc{QDisCoCirc-QA} is BQP-hard, we expect that it is not easy to solve with a classical computer, and that quantum algorithms to solve this ought to provide super-polynomial speedups. Therefore, supposing that (a) there exists a set of word embeddings for which a QDisCoCirc model is accurate on a dataset of interest, (b) the resulting quantum circuits are not easy to simulate classically (in the sense of Theorem \ref{typicalembeddings}), (c) a quantum device exists with sufficient fidelity to provide meaningful results, and (d) the precision $\epsilon$ required is not too high (in the sense of Section \ref{subsec:qdisco-native}), then it would be possible to perform an experiment on a quantum device which is hard to simulate classically but solves a task of interest. However, it is important to note that for many datasets this task can be solved classically by other means (notwithstanding the arguments presented in favor of QDisCoCirc in Section \ref{sec:intro}), so this would \emph{not} constitute a demonstration of quantum advantage over \emph{every} classical algorithm.

\section{Further Quadratic Speedups}
\label{sec:grover}

% \kon{PROOFS IN APPENDIX, if possible all of them}

% % On top of the exquadv we have over classically \emph{simulating} QDisCoCirc models, he have also
% % Grooooover on top of that: 1) searching over effects at a particular place, 2) searching over permutations of where a particular effect is placed.
% % We're basically adapting Zeng-Coecke to text-level.

% \kon{Present qualgos for poly speedups for: 1) text sim and 2) generic QA. Essential circuits and diagrams in the main text, details in the App. When presenting details of QA in the App, also present the babi1-style version of searching over permutations.}

In this section, we present a quantum algorithm for obtaining polynomial speedups for the task of text question answering that we presented in Sections \ref{sec:qdiscocirc} and \ref{sec:hardness}.
Specifically, our results are obtained by adapting and improving on the approach of the work of Zeng and Coecke \cite{WillC}. These algorithms provide a quadratic improvement over the simpler swap test algorithm, as well as being quadratically faster than any classical algorithm for solving the same task. While they require a QRAM \cite{giovannetti2008quantum}, it is used to store exponentially fewer bits as compared to \cite{WillC}.

Basheer et al \cite{basheer2021quantum} give a quantum algorithm for accelerating the $k$-nearest neighbors problem as follows. Given the vectors $v_0 \in \mathbb{C}^N$ and $\{v_1, \dots, v_k\} \subset \mathbb{C}^N$ such that $||v_i|| = 1$ and quantum amplitude encoding oracles $\mathcal{V}\ket{0} = \ket{v_0}$ and $\mathcal{W}\ket{i}\ket{0} = \ket{i}\ket{v_i}$ with gate complexities $T_V$ and $T_W$, then there is a quantum algorithm to find $i = \arg\max_{j \geq 1} | \langle v_0\mid v_j\rangle |$ with error at most $\epsilon$ and fixed positive success probability in time $\tilde{O}\left(\sqrt{k}\epsilon^{-1}(T_V + T_W + \log^2{N}) \right)$.
The algorithm takes quantum states, encodes the fidelity via a partial swap test, then digitizes this with the quantum analog-to-digital conversion algorithm of Mitarai et al \cite{mitarai2019quantum} before finding the minimum via D\"urr-H\o{}yer search \cite{durr1999quantum}. In the case that $\mathcal{V}(\ket{0}\bra{0}) = \rho_0$ and $\mathcal{W}(\ket{i}\bra{i} \otimes \ket{0}\bra{0}) = \ket{i}\bra{i} \otimes \rho_i$ are actually operators preparing mixed states, then the quantity calculated will be $i = \arg \max_{j \geq 1} \tr{\rho_0 \rho_i}$.

\subsection{Application to DisCoCirc}

Since DisCoCirc is based on a circuit architecture, it is natural to work with the \emph{word embeddings} described as \emph{quantum circuits}, and construct the amplitude encoding of texts directly as quantum circuits. We will apply the nearest-neighbor algorithm to the following task: given a text $T_0$ and a set of texts $T_{i}$ for $k \geq i > 0$, find $i$ such that $T_0$ and $T_i$ are the most similar, and thus that maximizes $\tr{\rho_0 \rho_i}$, where $\rho_i$ is the density matrix corresponding to the QDisCoCirc circuit for $T_i$. This extends both the question-answering and text-similarity tasks given in Section \ref{sec:qdiscocirc}.

Unlike the previous section where we assumed that the boxes in QDisCoCirc were arbitrary unitaries, we now suppose that each box with $d$ wires is a particular instance of a parameterized quantum circuit ansatz. Let us define the following hyperparameters: (a) each noun wire is mapped to a set of qubits with Hilbert space dimension $N$, (b) each QDisCoCirc circuit considered has at most $w$ unitaries and $n$ noun wires total, (c) each unitary in the quantum circuit acts on at most $d$ wires, (d) there are $V$ different word embeddings given, (e) the number of gates in the ansatz for $d$ wires of dimension $N$ scales as $A_g(N, d)$, (f) the number of parameters in the ansatz for $d$ wires of dimension $N$ scales as $A_p(N, d)$, and (g) each parameter for the ansatz is stored with $P$ bits of precision.

To solve this task, we can straightforwardly apply the quantum $k$-nearest neighbors algorithm discussed above, with $\mathcal{V}(\ket{0}\bra{0}) = \rho_0$ and $\mathcal{W}(\ket{i}\bra{i} \otimes \ket{0}\bra{0}) = \ket{i}\bra{i} \otimes \rho_i$. Clearly, $\mathcal{V}$ is given by the quantum circuit preparing $\rho_0$, however for $\mathcal{W}$ we need a significantly more complicated circuit. This is explicitly constructed in Appendix \ref{sec:oracle}, and requires the following QRAM oracles:
\begin{itemize}
    \item $\mathrm{Param}_{i j}\ket{k}\ket{0} = \ket{k}\ket{\phi_{i j k}}$ where $\phi_{i j k}$ is a $P$-bit fixed-precision binary encoding of the $j$th parameter for the ansatz of the $i$th unitary in the circuit preparing $\rho_k$,
    \item $\mathrm{Width}_i \ket{k}\ket{0} = \ket{k}\ket{W_{i k}}$ where $W_{i k}$ is a binary encoding of the width of the $i$th unitary in the circuit preparing $\rho_k$,
    \item $\mathrm{Index}_{i j}\ket{k}\ket{0} = \ket{k}\ket{I_{i j k}}$ where $I_{i j k}$ is an encoding of the position of the $j$th argument to unitary $i$ in the circuit preparing $\rho_k$. Note that the encoding is not a mapping from zero-indexed position to binary, but it is of size $O(\log(w))$ -- see Appendix \ref{sec:oracle}.
\end{itemize}
Each of these oracles can be implemented using a series of calls (one for each bit) to bucket-brigade QRAMs of size $k$. Therefore, $O(kw(PA_p(N, d) + \log(d) + d\log(w)))$ bits of QRAM is required in total, each taking time $O(\log(k))$ to query. 

If we assume that $A_p(N, d) = O(\operatorname{poly}(\log(N), d))$ (this is true for any of the widely-used ansatze defined by Sim et al \cite{Sim_2019}, for example), that $w = O(\operatorname{poly}(n))$ (which is justified by Heap's law), and that $P$ and $d$ are fixed, then this is $O(k\operatorname{poly}(n, \log(N)))$ bits total. By comparison, adapting the strategy employed in \cite{WillC} to QDisCoCirc would require at most $O(kN^n)$ bits total, so our approach provides a significant saving.  

The oracle for $\mathcal{W}$ is constructed by alternating layers of multiplexers, which bring the qubits for each unitary to the top of the circuit, and parameterized ansatze to implement each unitary. Each of these load parameters from QRAM to specify their behavior. The overall construction is as follows:
\blockfig{oracle-maintext}
Using this, we derive the following result:

\renewcommand{\thetheorem}{7}
\begin{theorem}
    Given a text $T_0$ and a set of texts $T_{i}$ for $k \geq i > 0$, we can find $i$ that maximizes $\tr{\rho_0 \rho_i}$, with error at most $\epsilon$, in time
    $$\tilde{O}\left(\frac{\sqrt{k}}{\epsilon} (w d \log(k) (n \log(N) + P A_g(N, d)) + \log^2(N))
    \right)$$
    on a quantum computer with a constant success probability, where $\rho_i$ is the density matrix corresponding to the QDisCoCirc quantum circuit for $T_i$.
\end{theorem}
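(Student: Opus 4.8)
The plan is to recognize the task as an instance of the quantum $k$-nearest-neighbors problem and to invoke the algorithm of Basheer et al.\ in its mixed-state form, where the oracles satisfy $\mathcal{V}(\ketbra{0}{0}) = \rho_0$ and $\mathcal{W}(\ketbra{i}{i} \otimes \ketbra{0}{0}) = \ketbra{i}{i} \otimes \rho_i$, so that the returned index is exactly $\arg\max_{j \geq 1} \tr{\rho_0 \rho_j}$ as required. Given such oracles with gate complexities $T_V$ and $T_W$, that algorithm runs in time $\tilde{O}(\sqrt{k}\,\epsilon^{-1}(T_V + T_W + \log^2 N))$ with constant success probability and error at most $\epsilon$, so the entire argument reduces to constructing $\mathcal{V}$ and $\mathcal{W}$ and bounding $T_V$ and $T_W$; the correctness and error guarantees then come for free from the nearest-neighbors routine.

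The oracle $\mathcal{V}$ is immediate: it is just the QDisCoCirc circuit $U_{T_0}$ acting on $\ket{0}$, consisting of at most $w$ unitaries of at most $A_g(N, d)$ gates each, so $T_V = O(w\,A_g(N, d))$. The real work — and the main obstacle — is the index-controlled oracle $\mathcal{W}$, which must coherently assemble and run the circuit preparing $\rho_i$ from the classical description of $T_i$ held in QRAM, even though different texts yield circuits with different gate sequences, widths, and wire connectivity. I would build this exactly as in Appendix~\ref{sec:oracle}: process the $w$ unitary slots in turn, and for each slot (i) query $\mathrm{Width}_i$, $\mathrm{Index}_{ij}$ and $\mathrm{Param}_{ij}$ from the bucket-brigade QRAMs, each individual bit costing $O(\log k)$ per query, (ii) use an index-controlled multiplexer to route the $d$ relevant wires into position, and (iii) apply the parameterized ansatz with the loaded angles.

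Bounding $T_W$ is then a matter of summing these per-slot costs over the $w$ slots and attaching the $O(\log k)$ QRAM-query factor: the multiplexed routing of $d$ wires among $n$ wires of $\log N$ qubits contributes $O(d\,n\log N)$, and making each of the $A_g(N, d)$ ansatz gates programmable in a $P$-bit angle register contributes $O(P\,A_g(N, d))$, giving $T_W = \tilde{O}\big(w\,d\log(k)(n\log N + P\,A_g(N, d))\big)$. Since $T_V$ is dominated by $T_W$, substituting $T_V + T_W$ together with the additive $\log^2 N$ digitization term into the Basheer et al.\ bound and multiplying by the $\sqrt{k}/\epsilon$ prefactor yields the claimed complexity $\tilde{O}\big(\frac{\sqrt{k}}{\epsilon}(w\,d\log(k)(n\log N + P\,A_g(N, d)) + \log^2 N)\big)$.

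The most delicate step is the correctness and cost accounting of $\mathcal{W}$: I expect the effort to concentrate on verifying that the multiplexers reversibly permute wires according to the QRAM-supplied indices for every text in the database simultaneously, and that implementing each ansatz gate from a $P$-bit angle register incurs only the claimed $O(P)$ overhead per gate. By contrast, the reduction to $k$-nearest-neighbors, the treatment of mixed $\rho_i$ (handled by the mixed-state form of the subroutine, which computes $\tr{\rho_0\rho_j}$ directly), and the final substitution into the stated time bound are all routine.
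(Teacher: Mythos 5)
Your proposal matches the paper's proof essentially exactly: both invoke the mixed-state form of the Basheer et al.\ $k$-nearest-neighbors algorithm with $\mathcal{V} = U_{T_0}$ and the QRAM-driven oracle $\mathcal{W}$ (multiplexed wire routing plus parameterized ansatze loading $P$-bit angles), bound $T_W = \tilde{O}(w\,d\log(k)(n\log N + P\,A_g(N,d)))$, and substitute into the $\tilde{O}(\sqrt{k}\,\epsilon^{-1}(T_V + T_W + \log^2 N))$ runtime. Your accounting of $T_V = O(w\,A_g(N,d))$ is in fact slightly more careful than the paper's stated $O(A_g(N,d))$, but as you note it is dominated by $T_W$, so the final bound is unaffected; the only detail you omit is the paper's remark that controlled versions of the oracles are needed, costing only a constant factor.
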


\begin{proof}
    We apply the quantum $k$-nearest neighbor algorithm of Basheer et al \cite{basheer2021quantum} with $\mathcal{V}$ and $\mathcal{W}$ defined above. These can be implemented with $O(A_g(N, d))$ and $O(w d \log(k) (n \log(N) + P A_g(N, d)))$ gates respectively, see Appendix \ref{sec:oracle} for details. We require controlled versions of these, but this only increases the gate count by a constant factor.
\end{proof}

Comparing this to the naive swap test-based algorithm for solving the question-answering task defined in Section \ref{sec:hardness}, it is clear that this method achieves a quadratic speedup both in $k$ and $\epsilon$, up to extra logarithmic factors needed to construct the oracles. Comparing against a classical algorithm, note that for any given $\epsilon$, this algorithm is better than any classical algorithm, which clearly must be at least $O(k)$, and so for large values of $k$ we would expect to see a speedup as compared to any classical algorithm to solve the same task. However, as discussed in Section \ref{sec:qdiscocirc}, the classical algorithm scales much better in terms of $\epsilon$, namely $O(\log(\frac{1}{\epsilon}))$ vs $O(\epsilon^{-1})$, and so the same concerns about the dependence on $N$ and $n$ apply here. Additionally for this algorithm, for a given dataset, it is necessary to analyze the scaling of $\epsilon$ in terms of $k$ -- for instance, if $\epsilon = o(k^{-\frac{1}{2}})$ is required, then there would be no quadratic speedup in $k$ compared to the classical algorithm.

\section{Other Model-Native Tasks}
\label{sec:other-tasks}

In the previous sections, we have considered only the fundamental primitive of text-similarity, and its local version that we call question-answering. However, we can also propose some simple model-native text processing tasks regarding `character development'. All tasks we introduce here have a natural representation as text circuits in the DisCoCirc framework.

To do so, we introduce two new generators: the cap and cup. The cap semantically represents a joint state of two nouns which are unspecified but identical. Likewise, the cup is the corresponding effect that tests for this situation.
\blockfig{cap-cup-def}
Their translations into QDisCoCirc are naturally defined as the Bell state and measurement respectively. These generators are both self-inverse, and obey certain relations along with the identity and swap generators to ensure that they are well-behaved \cite{CKbook} - in particular, to maintain the property that only connectivity of the diagram matters.

We can use these to construct several tasks. For example, measuring a `character arc' --- comparing the similarity of a noun state wire before and after it has been acted on by a text, either with itself or between a pair of nouns. We can also form a diagrammatic trace using caps, cups, and identity wires, which we can use to measure a similar quantity in a way that is independent of the initial noun state (i.e. how much \emph{any} noun changes if it was taking that particular role in the text). Let $T$ be a text and $N_1$ and $N_2$ be two nouns present in it, then we can form the following text circuits:
\renewcommand{\figscale}{0.95}\blockfig{character-arc}\renewcommand{\figscale}{1.0}

Further, we can have combinations of the above, by applying noun effects and traces in the same circuit. This is to be understood as testing for properties in conjunction. From this, we see the motivation for including cups, caps, and inverses in our framework -- it allows us to represent more intuitively some natural classes of tasks as a single text circuit. Additionally, while the results in Sections \ref{sec:hardness} and \ref{sec:grover} are tailored to the sentence-similarity and question-answering tasks, we would expect that similar theorems hold for all the tasks defined here as well. 
\section{Discussion and Future Work}\label{sec:outlook}

In this paper we focused on the DisCoCirc framework because of its explainability and interpretability in compositional terms. For the primitive task of question-answering within the QDisCoCirc model, we showed BQP-hardness. Further, we adapted known quantum algorithms to achieve for QDisCoCirc a Grover-like quadratic speedup over any classical algorithm for tasks such as text similarity, under certain technical conditions.

Recalling that the way in which we constructed QDisCoCirc is by applying a map from text circuits to quantum circuits, other variants of this mapping could be explored in the future. For example, there are families of quantum circuits that are \emph{classically efficiently simulable}, such as those constructed from nearest-neighbor match-gates \cite{Jozsa_2008}, which have been used successfully to numerically optimize large quantum circuits \cite{Matos_2023} and could be used as `warm start' initializations for QDisCoCirc models. Viewing quantum circuits as special cases of tensor networks, we can also consider models based on tensor networks (although they would be costly to compute). This would constitute compositional tensor-network machine learning, which is an emerging subfield in machine learning\cite{Huggins_2019,tomut2024compactifai}. 

Overall, our approach can be viewed as providing sparse structural priors to NLP models, potentially aiding generalization performance or allowing more efficient training. Such an approach could also provide a structure that may guide how mechanistic interpretability methods can be applied \cite{liu2023seeing}. Most importantly, we prioritized transparent model-building, with performance and broad applicability as a secondary target.

The results here are presented in a way compatible with the parser introduced by Liu, Shaikh, Rodatz, Yeung, and Coecke \cite{liu2023pipeline}, which mechanically transforms a subset of English text into DisCoCirc text circuits. Given a different parser, the definition of the mapping from text circuits to quantum circuits might need modification. However, the results of Section \ref{sec:hardness} and \ref{sec:grover} regarding quantum algorithms would still be applicable. These results concern fault-tolerant quantum computers (either via deep quantum circuits, or by the assumption of quantum random access memories). 

Focusing on more near-term hardware, and given access to a parser and state-of-the-art near-term quantum processors \cite{QuantumVolumeQuantinuum}, allows us to perform heuristic explorations based on the algorithm in Section \ref{subsec:qdisco-native}.  In a separate paper \cite{experiment}, we report on performing tasks such as question-answering within the noisy intermediate-scale quantum regime. We train the quantum word embeddings that encode the semantics of texts in-task, as was done in previous work with DisCoCat \cite{GrammarAwareSentenceClassification,lorenz2021qnlp}. However, it is also promising to experiment with training these in a task-agnostic way.  

Furthermore, QDisCoCirc allows us to \emph{put to the test} the principle of \emph{compositionality} with respect to training, i.e.~to quantify to what extent a compositional model aids generalization to larger instances after being trained on instances up to a certain size. This setup is motivated by the possibility that a compositional approach to quantum machine learning with parameterized circuits bypasses the fundamental problem posed by barren plateaus \cite{ragone2023unified}. Our results in \cite{experiment} are very promising in this respect.

\paragraph{Acknowledgments} We thank Matty Hoban for discussions on complexity theory, Harry Buhrman, Richie Yeung, Luuk Coopmans and Ilyas Khan for comments on the manuscript, the QDisCoCirc team (Saskia Bruhn, Tiffany Duneau, Gabriel Matos, Anna Pearson, and Katerina Saiti) for helpful discussions refining the model, and the parser-pipeline team (Jonathon Liu, Razin Shaikh, and Benjamin Rodatz) for answering all of our questions.
% We also make an impassioned plea to the whole of Quantinuum's Oxford office to stop calling the first author 'advantage guy' - it's getting old fast.

% \bibliographystyle{plain}

% \nocite{*}
\bibliographystyle{eptcs}
\bibliography{refs}

\clearpage
\appendix
\section{Decomposing Frames}
\label{sec:frame-decomp}

Our semantic functor maps
states, effects, and boxes to pure quantum operations trivially.
However, frames, which are mapped to quantum supermaps, in general, need some special treatment.
Some frames in text circuits may contain more wires for one of their arguments than are input and output to the frame -- this arises in many cases, for instance in reflexive structures. We cannot represent these extra wires as auxiliary qubits because these would then be subsequently discarded, making the state mixed.
In general, quantum supermaps can be written as pairs of unitaries conjugating their arguments, along with a side channel consisting of some auxiliary qubits. However, this conflicts with our no-ancilla requirement.

Therefore, we will instead \emph{constrain} our representation of frames as follows:
\begin{itemize}
    \item We lay the frame out vertically, with unitaries before the first argument, between every argument, and after the last argument.
    \item For every wire in each argument, we map it to one of the wires of the unitary: if that wire corresponds to an unused noun wire that is input to the frame, we map it to that wire. Otherwise, we map it to any unused noun wire. If there are none, then we delete the wire.
    \item After this mapping, we substitute the argument between the unitaries of the frame, and any unused wires are connected directly between the unitaries of the frame to serve as side-channels.
\end{itemize}
In order to resolve which wires correspond to which nouns (and thus where to assign each wire of the frame arguments), we rely on coindexing information provided by the parser, which within sentences is provided by finding the grammatical 'head' of each phrase, and between sentences is provided by a coreference tool \cite{liu2023pipeline}. To minimize the number of wires that are deleted, we perform this mapping for each argument independently, not respecting the wire assignments in previous arguments. This may create boxes that are ungrammatical, but we assign them unitaries in the same way as other boxes, so this is not an issue.
\renewcommand{\figscale}{0.95}\blockfig{sandwich-example}\renewcommand{\figscale}{1.0}
This illustrates two examples of the sandwich construction. In each, the wires inside the unitaries making up the frame are labeled with which nouns they are representing, and wires that are being used as side channels are labeled with `*'. In the first example, we see that swaps may be introduced to map wires to the correct locations. This is an artifact of the notation and does not appear in the compiled quantum circuits. In the second example we see a wire being deleted since there are no unused wires to assign it to, and in the process this creates an ungrammatical box for `hates', since it is a transitive verb.

\section{Proofs of Theorems}
\label{sec:proofs}

\renewcommand{\thetheorem}{2}
\begin{theorem}
    \label{qabqp}
    \textup{\textsc{QDisCoCirc-QA}} can be solved on a quantum computer in time
    $$ O\left(\frac{k\log\left(\frac{k}{\delta}\right)|V|_w(|T| + \max_i|Q_i|)}{\epsilon^2} \right) $$
    with failure probability $\delta$, where $|V|_w$ is the maximum size of any word embedding in $V$.
\end{theorem}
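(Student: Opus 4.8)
The plan is to reduce \textsc{QDisCoCirc-QA} to $k$ independent invocations of the text-similarity primitive established above (Theorem~1), one per candidate question, followed by a classical arg-max over the resulting estimates. For each $i$ I would build the circuit that prepares $U_T\ket{0}$ and $U_{Q_i}\ket{0}$ and run a swap test between them, exactly as in the swap-test theorem. The only genuinely new feature is the factor $\rho_{Q_i}\otimes I$ appearing in the definition, which encodes the discarding of the noun wires of $T$ that are not referenced by $Q_i$.

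The first key step is to show that this discard can be implemented for free by restricting the swap test to the referenced subsystem. Writing $S$ for the register carrying the nouns that $Q_i$ acts on and $\bar S$ for its complement, cyclicity of the trace gives $\tr{\rho_T(\rho_{Q_i}\otimes I)}=\tr{(\operatorname{tr}_{\bar S}\rho_T)\,\rho_{Q_i}}$. Hence controlled-swapping only the $S$-register of $U_T\ket{0}$ against $U_{Q_i}\ket{0}$, leaving $\bar S$ untouched, produces a control qubit whose measurement probability is $\tfrac12+\tfrac12\,\tr{\rho_T(\rho_{Q_i}\otimes I)}$, with no explicit discard or post-selection required. Each such circuit has $O(|V|_w(|T|+|Q_i|))$ elementary gates, since $U_T$ and $U_{Q_i}$ are compositions of at most $|T|$ and $|Q_i|$ word embeddings, each compiled to at most $|V|_w$ elementary gates, and the swap test contributes only linearly many further gates.

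Next I would control the error budget. By the same Hoeffding bound used in Theorem~1, $O(\log(1/\delta')\epsilon^{-2})$ repetitions estimate a single $s_i:=\tr{\rho_T(\rho_{Q_i}\otimes I)}$ to additive precision $\epsilon/2$ with failure probability $\delta'$. Setting $\delta'=\delta/k$ and taking a union bound guarantees that, except with probability $\delta$, \emph{all} $k$ estimates $\hat s_i$ are simultaneously within $\epsilon/2$ of their true values, at a cost of $O(\log(k/\delta)\epsilon^{-2})$ samples per question. I would then output $j=\arg\max_i\hat s_i$, and the correctness argument is the standard arg-max estimate: if every $|\hat s_i-s_i|\le\epsilon/2$ and $m=\arg\max_i s_i$, then $s_j\ge\hat s_j-\epsilon/2\ge\hat s_m-\epsilon/2\ge s_m-\epsilon$, so $|s_j-\max_i s_i|\le\epsilon$, which is exactly the \textsc{QDisCoCirc-QA} guarantee (and can be made strict by estimating to precision slightly below $\epsilon/2$). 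Multiplying samples-per-question, gates-per-sample, and the $k$ questions yields the stated circuit size $O\!\left(k\log(k/\delta)\,|V|_w(|T|+\max_i|Q_i|)\,\epsilon^{-2}\right)$.

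The routine parts here are the Hoeffding and union bounds, both inherited directly from Theorem~1. The step I expect to require the most care is the reduction of the discard factor $\rho_{Q_i}\otimes I$ to a subsystem swap test: verifying that restricting the controlled-swap to $S$ genuinely realizes the partial trace over $\bar S$, and that the subsystem $S$ is consistently identified with the wires of $T$ that the parser assigns to the nouns of $Q_i$. Once that identification is pinned down, the remainder is bookkeeping of the error and gate budgets.
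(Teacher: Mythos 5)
Your proposal is correct and follows essentially the same route as the paper's proof: $k$ independent swap-test estimations, a Hoeffding bound with failure probability $\delta/k$ per question (the paper combines these via Bernoulli's inequality rather than a union bound, which is equivalent), the standard arg-max perturbation argument with precision $\epsilon/2$, and the same per-shot gate count $O(|V|_w(|T|+|Q_i|))$. If anything, your explicit verification that the partial swap test on the referenced subsystem realizes $\tr{\rho_T(\rho_{Q_i}\otimes I)}=\tr{(\operatorname{tr}_{\bar S}\rho_T)\,\rho_{Q_i}}$ is more careful than the paper, which asserts the swap-test circuit's measurement probability without spelling out how the discarded wires are handled.
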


\begin{proof}
    To calculate each
    $$ p_j = \tr{\rho_T(\rho_{Q_j} \otimes I)} $$
    we produce a circuit using the swap test where the probability of measuring a zero on the first qubit is $\frac{1}{2} + \frac{p_j}{2}$. Let $\mu_{j, n}$ be the estimate of $p_j$ after $n$ shots. By Hoeffding's inequality, 
    $$ P(|\mu_{j, n} - p_j| \geq \epsilon') \leq 2\exp\left(-\frac{n\epsilon'^2}{2}\right) $$
    thus if we take $n = \frac{2}{\epsilon'^2}\log(\frac{2k}{\delta})$, then we have
    $$ P\left(\bigwedge_{j = 1}^k|\mu_{j, n} - p_j| \leq \epsilon'\right) \geq \left(1 - \frac{\delta}{k}\right)^k \geq 1 - \delta$$
    by Bernoulli's inequality. Note that if $|\mu_{j, n} - p_j| \leq \epsilon'$ for all $j$, then $|\max_i\mu_{n, i} - \max_ip_i| \leq \epsilon'$, and hence if $j = \arg\max_i \mu_{n, i}$ then $|p_j - \max_i p_i| \leq 2\epsilon'$. Therefore, if $\epsilon' = \frac{\epsilon}{2}$, then this $j$ correctly solves the \textsc{QDisCoCirc-QA} instance with probability $1 - \delta$. For each $j$, each shot of the swap test circuit takes $O\left((|T| + |Q_j|)|V|_w\right)$ time since $U_T$ and $U_{Q_j}$ contain $O(|T||V|_w)$ and $O(|Q_j||V|_w)$ gates respectively, and the width of $U_{Q_j}$ must be at most $2|Q_j||V|_w$. Therefore, we have an overall time complexity of:
    $$ O\left(\sum_i (|T| + |Q_i|)|V|_w n\right) = O\left(\frac{k\log\left(\frac{k}{\delta}\right)|V|_w(|T| + \max_i|Q_i|)}{\epsilon^2} \right)$$
\end{proof}

\begin{definition}[\textsc{Approx-QCircuit-Prob}]
    Suppose we are given a quantum circuit $C$ on $n$ qubits with $m$ 2-local gates, where $C$ is taken from a uniform family such that $m$ is polynomial in $n$, along with the promise that either
    \begin{itemize}
        \item measuring the first qubit $C\ket{0}$ has probability at least $\frac{2}{3}$, or
        \item measuring the first qubit $C\ket{0}$ has probability at most $\frac{1}{3}$.
    \end{itemize}
    Then the \textup{\textsc{Approx-QCircuit-Prob}} problem is to distinguish between these two cases. This is BQP-hard.  
\end{definition}

\renewcommand{\thetheorem}{3}
\begin{theorem}
    \label{worstworst}
    Suppose that a set of word embeddings $V$ satisfies the following:
    \begin{enumerate}
            \item The operations of $V$ use one qubit for each input wire,
            \item $V$ contains arbitrarily many proper nouns,
            \item $V$ contains at least two adjectives that generate a dense subset of $SU(2)$,
            \item $V$ contains at least one transitive verb that is entangling
    \end{enumerate}
    then for any fixed $\epsilon < \frac{1}{7}$, \textup{\textsc{QDisCoCirc-QA}} is BQP-hard. 
\end{theorem}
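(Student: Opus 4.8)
The plan is to reduce from \textsc{Approx-QCircuit-Prob}, which was just shown to be BQP-hard. Given an instance of that problem -- a circuit $C$ on $n$ qubits built from $m = \mathrm{poly}(n)$ two-local gates, with the promise that the probability $p_C$ of measuring $1$ on the first qubit of $C\ket{0}$ is either $\geq \frac{2}{3}$ or $\leq \frac{1}{3}$ -- I would construct, in polynomial time, a context text $T$ and two question texts $Q_0, Q_1$ over the given word embeddings $V$ so that the index returned by \textsc{QDisCoCirc-QA} reveals which side of the promise holds.

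First I would fix the qubit layout. By conditions~1 and~2 I can allocate one proper noun per qubit and so obtain $n$ single-qubit wires; since $C$ ranges over all $n$, this is exactly why arbitrarily many proper nouns are required. The core step is to realize $C$ approximately as the QDisCoCirc text circuit $U_T$. Under condition~1 an adjective (a one-noun box) is a single-qubit gate and the transitive verb (a two-noun box) is a two-qubit gate. By condition~3 the two adjectives generate a dense subgroup of $SU(2)$, and -- using that every generator has an inverse in the framework -- the Solovay--Kitaev theorem lets me approximate any prescribed single-qubit unitary to error $\mu$ using a word in the adjectives of length only $\mathrm{polylog}(1/\mu)$. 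By condition~4 the entangling transitive verb, combined with these single-qubit gates, is universal on two qubits (Brylinski--Brylinski), and since a transitive verb may be applied to any ordered pair of nouns we have all-to-all connectivity and can compile an approximation of each two-local gate of $C$ between the appropriate wires. Composing these blocks yields a text circuit with $\|U_T - C\| \leq \eta$ and size $\mathrm{poly}(n, \log(1/\eta))$, hence polynomial.

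Next I would encode the acceptance test as questions on the single output wire. Taking $Q_1$ to be a statement placing the first noun in $\ket{1}$ (via an adjective approximating the flip $\ket{0}\mapsto\ket{1}$) and $Q_0$ the statement leaving it in $\ket{0}$, and letting each question reference only this noun so the remaining wires are discarded, the definition of \textsc{QDisCoCirc-QA} gives $\tr{\rho_T(\rho_{Q_b}\otimes I)} = \langle b|\rho_T^{(1)}|b\rangle$, the probability of outcome $b$ on the first qubit of $U_T\ket{0}$. These two similarities are $p_T$ and $1-p_T$, with $|p_T - p_C| = O(\eta)$. The promise thus forces a gap of at least $\frac{1}{3} - O(\eta)$ between them, the larger one sitting on the side dictated by the YES/NO answer. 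Choosing $\eta$ a small constant makes this gap at least $\frac{1}{7}$, which exceeds any fixed $\epsilon < \frac{1}{7}$; hence the unique question within $\epsilon$ of the maximum is the correct one, and reading off the returned index solves \textsc{Approx-QCircuit-Prob}.

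I expect the main obstacle to be the universality-with-precision bookkeeping in the middle step: verifying that a single fixed entangling verb together with a merely dense (rather than complete) single-qubit set compiles an arbitrary two-local gate, and that the combined Solovay--Kitaev and entangling-gate overheads stay polynomial while driving $\eta$ below the threshold that $\epsilon < \frac{1}{7}$ requires. The all-to-all connectivity and the availability of generator inverses (needed for Solovay--Kitaev) should be checked against the framework's composition rules, but the substantive work is the quantitative gate-count and error analysis.
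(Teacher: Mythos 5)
Your reduction skeleton is the same as the paper's: reduce from \textsc{Approx-QCircuit-Prob} by compiling the given circuit $C$ into a context text (adjectives for single-qubit gates via Solovay--Kitaev, the entangling transitive verb for two-qubit gates via exact two-qubit synthesis), then ask two questions about the first noun whose similarities track $P_C(0)$ and $P_C(1)$; your threshold arithmetic matches the paper's in spirit. But there are two genuine gaps. First, you implicitly assume the noun wires start in $\ket{0}$. In QDisCoCirc a text circuit begins with the proper-noun state preparations supplied by $V$, so in fact $U_T\ket{0} = C'\left(\bigotimes_i U_{N_i}\right)\ket{0}$ and $U_{Q_0}\ket{0} = U_{N_1}\ket{0}$, where the $U_{N_i}$ are arbitrary unitaries you do not get to choose. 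Your claimed identity $\tr{\rho_T(\rho_{Q_b}\otimes I)} = \bra{b}\rho_T^{(1)}\ket{b}$ therefore fails for general embeddings, and the two overlaps no longer equal (approximately) $P_C(0)$ and $1-P_C(0)$. The paper repairs this by prepending, on every noun wire of the context and of both questions, an approximation of $U_{N_i}^\dagger$ built from the adjectives, so that the effective input is $\approx\ket{0}^{\otimes n}$ and $U_{Q_1}\ket{0}\approx\ket{0}$, $U_{Q_2}\ket{0}\approx\ket{1}$.

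Second, you invoke the standard Solovay--Kitaev theorem on the grounds that ``every generator has an inverse in the framework.'' Those formal inverses are not available inside a text: $T$ and the $Q_i$ are parsed into circuits that apply word embeddings only in the forward direction (inversion enters only at the meta-level, when the question circuit is adjointed to form the effect in the similarity). Although positive words in $U_1,U_2$ are still dense (the closure of the generated semigroup is a compact group), the standard Solovay--Kitaev algorithm requires exact inverses in the gate set, so it does not apply as stated; the paper instead uses the \emph{inverse-free} Solovay--Kitaev algorithm of Bouland and Giurgica-Tiron, with word length $O(\log^{8.62}(1/\epsilon'))$, which also handles the approximations of $U_{N_i}^\dagger$ needed for the first fix. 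A smaller related point: Brylinski--Brylinski gives exact universality of an entangling gate plus single-qubit gates but no gate-count bound; the paper cites Zhang et al.\ for a constant-count exact synthesis of arbitrary two-qubit gates, which is what keeps the compilation overhead polynomial. You flagged both of these risks yourself, but as written the proposal relies on them, so the argument is incomplete without these repairs.
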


\begin{proof}
    Let $U_1$ and $U_2$ be the adjectives defined by condition three, and $U_3$ be the transitive verb identified by condition four. Then, by the inverse-free Solovay-Kitaev algorithm \cite{bouland2021efficient}, any single-qubit unitary $U$ can be approximated to precision $\epsilon'$ by a sequence of $U_1$ and $U_2$ operations of length $O(\log^{8.62}(\frac{1}{\epsilon'}))$. From a result of Zhang et al \cite{zhang2003exact}, any entangling gate combined with arbitrary single-qubit gates can be used to implement any two-qubit gate exactly in a constant number of gates (where the constant depends only on the definition of the entangling gate). Hence, any two-qubit gate can also be approximated arbitrarily well by $O(\log^{8.62}(\frac{1}{\epsilon'}))$ applications of $U_1$, $U_2$, and $U_3$.

    Now we can reduce from the \textsc{Approx-QCircuit-Prob} problem to \textsc{QDisCoCirc-QA}, by constructing a QDisCoCirc circuit corresponding to the given circuit $C$. First, assign each qubit a proper noun $N_i$ with corresponding unitary $U_{N_i}$ in $V$. To each noun apply the unitary $U_{N_i}^\dagger$. Each $U_{N_i}^\dagger$ is approximated as defined above. Then, for each gate in $C$, we approximate it as discussed above and apply it to the corresponding wires. This yields a QDisCoCirc circuit $U_T$ which approximates the original circuit $C$. 
    
    From this, we can generate $T$ as follows: for each gate in the circuit, we generate a sentence. If the gate acts on a single qubit $i$, we write ``\{$N_i$\} is \{$U_k$\}.'' where $U_k$ is the adjective corresponding to the gate. If the gate acts on two qubits $i$ and $j$, we write ``\{$N_i$\} \{$U_3$\} \{$N_j$\}.'' To produce $Q_1$ and $Q_2$, we first assemble quantum circuits $U_{Q_1} = U_{N_1}^\dagger U_{N_1}$ and $U_{Q_2} = U_XU_{N_1}^\dagger U_{N_1}$ where $U_X$ is an approximation of the X gate, and $U_{N_1}^\dagger$ is also approximated as above. Then $Q_1$ and $Q_2$ can be generated in the same way as $T$.

    Let $\epsilon' \geq \frac{1}{42}$ be such that $\epsilon' + \epsilon < \frac{1}{6}$, and let $P_U(i)$ be the probability of measuring $\ket{i}$ on the first qubit of $U\ket{0}$. Suppose we solve the \textsc{QDisCoCirc-QA} instance associated with $V$, $T$, $\{Q_1, Q_2\}$ to precision $\epsilon$. Note that since $U_{Q_1}\ket{0} \approx \ket{0}$, $\tr{\rho_T(\rho_{Q_1} \otimes I)} \approx P_{U_T}(0)$. Likewise, since $U_{Q_2}\ket{0} \approx \ket{1}$, $\tr{\rho_T(\rho_{Q_2} \otimes I)} \approx P_{U_T}(1)$. As $U_T \approx C$, if we find $j = 1$, then we can conclude that $P_C(0) > P_C(1) = 1 - P_C(0) \implies P_C(0) > \frac{1}{2}$ up to some error $\epsilon + \epsilon' < \frac{1}{6}$. However, in the definition of \textsc{Approx-QCircuit-Prob}, we are guaranteed that $P_C(0) > \frac{2}{3}$ or $P_C(0) < \frac{1}{3}$, which are separated from $\frac{1}{2}$ by $\frac{1}{6}$, so by checking whether $j = 1$ or $j = 2$, we can distinguish between these cases. Therefore, \textsc{QDisCoCirc-QA} is BQP-hard.
\end{proof}

\renewcommand{\thetheorem}{4}
\begin{theorem}
    \label{haarrandom}
    Given a set of word embeddings $V$, suppose that operations in $V$ are independent Haar-random unitaries. Then conditions three and four of Theorem \ref{worstworst} are almost surely satisfied for all word embeddings containing at least two adjectives and one transitive verb.
\end{theorem}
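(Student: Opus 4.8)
The plan is to verify conditions three and four separately. Since $V$ contains only finitely many word embeddings, it suffices to show that for any fixed choice of two adjectives the density requirement fails with probability zero, and that any fixed transitive verb fails to be entangling with probability zero; a finite union of null sets is null, so almost surely all the requisite embeddings are ``good'' simultaneously. As in condition one of Theorem~\ref{worstworst}, an adjective acts on a single noun and so is a Haar-random element of $U(2)$ (which, discarding the computationally irrelevant global phase, we treat as Haar-random in $SU(2)$), while a transitive verb acts on two nouns and so is a Haar-random element of $U(4)$.

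For condition four, I would argue that the non-entangling two-qubit gates form a Haar-null subset. A unitary that maps every product state to a product state is necessarily of the form $A\otimes B$ or $\mathrm{SWAP}\,(A\otimes B)$; this is a union of two submanifolds of positive codimension inside the ambient unitary group (for $A,B\in SU(2)$ each piece is $6$-dimensional inside the $15$-dimensional $SU(4)$), hence of measure zero. Because the Haar measure is absolutely continuous with respect to Lebesgue measure on the group manifold, a Haar-random transitive verb lies outside this set---and is therefore entangling---with probability one. Any stricter notion of ``entangling'' demanded by the universality result of Zhang et al.\ only excludes a further lower-dimensional set, so the conclusion is unchanged.

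For condition three, the crux is that two independent Haar-random $U_1,U_2\in SU(2)$ almost surely generate a dense subgroup. First, a single Haar-random $U_1$ almost surely has infinite order: the finite-order elements are those whose eigenvalues $e^{\pm i\theta}$ are roots of unity, i.e.\ $\theta$ a rational multiple of $\pi$, and these lie in a countable union of two-dimensional conjugacy classes inside the three-dimensional group $SU(2)$, hence form a null set. For such a $U_1$ the closure $\overline{\langle U_1\rangle}$ is a full maximal torus $T_1$ (a conjugate of the diagonal $U(1)$). By the classification of closed subgroups of $SU(2)$, any closed subgroup containing $T_1$ is either $T_1$, its normalizer $N(T_1)=T_1\sqcup wT_1$, or all of $SU(2)$; in particular a \emph{proper} such subgroup must lie in $N(T_1)$, which is one-dimensional and therefore Haar-null in $SU(2)$. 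Hence, conditioned on $U_1$ having infinite order, the group $\overline{\langle U_1,U_2\rangle}$ fails to be all of $SU(2)$ only if $U_2\in N(T_1)$, an event of probability zero. By Fubini, the pair generates densely with probability one.

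I expect this last density argument to be the main obstacle, since it relies on the Lie-theoretic structure of $SU(2)$ rather than a bare dimension count: one must exclude \emph{every} proper closed subgroup, and the cleanest route is the conditioning argument above, exploiting that the closure of a single generic generator is already a maximal torus so that only the measure-zero normalizer can obstruct density. Finally, taking a union bound of these null events over all pairs of adjectives and over all transitive verbs in the finite set $V$ shows that conditions three and four of Theorem~\ref{worstworst} are almost surely satisfied.
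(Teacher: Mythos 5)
Your proof is correct. For condition four it coincides with the paper's argument: the non-entangling gates are exactly those of the form $A \otimes B$ or $\mathrm{SWAP}\,(A \otimes B)$, a positive-codimension and hence Haar-null subset of $SU(4)$ (your dimension count of $6$ is in fact more accurate than the paper's stated $9$, though nothing hinges on the exact number). For condition three, however, you take a genuinely different route. The paper's key lemma is deterministic: two non-commuting elements of infinite order generate a dense subgroup of $SU(2)$; it then separately shows that commuting pairs and finite-order elements (countably many two-dimensional conjugacy classes) are null sets. The paper justifies its lemma by asserting that the only closed positive-dimensional Lie subgroup of $SU(2)$ is $SU(2)$ itself, which is imprecise as stated --- maximal tori and their normalizers are closed, positive-dimensional and proper (indeed the paper's own proof of Theorem 5 explicitly works with the infinite closed proper subgroup $S \rtimes_\phi C_2$, i.e.\ $N(T)$). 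Your conditioning argument sidesteps this looseness: a generic $U_1$ closes up to a full maximal torus $T_1$; the closed subgroups containing a maximal torus are exactly $T_1$, $N(T_1)$, and $SU(2)$; so failure of density forces $U_2$ into the one-dimensional null set $N(T_1)$, and Fubini finishes. What each approach buys: the paper's deterministic criterion (non-commuting plus infinite order implies dense) is the ingredient that gets reused for the numerical test in Theorem 5, while your argument is more self-contained and is rigorous exactly where the paper's subgroup claim is loose. One minor remark: your union bound over all pairs proves more than needed --- condition three only asks that \emph{some} pair of adjectives generates a dense subgroup, so fixing one pair and one verb suffices; and since a countable union of null sets is null, your stronger ``all pairs'' conclusion survives even if $V$ is countably infinite.
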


\begin{proof}
    Let $U_1$ and $U_2$ be the adjectives defined by condition three of Theorem 1, and $U_3$ be the transitive verb identified by condition four. Consider $U_3$ as a point in $SU(4)$. If $U_3$ is not entangling it is either separable or locally equivalent to the SWAP gate. The spaces formed in both of these cases are isomorphic to $SU(2) \times SU(2)$ which is a 9-dimensional subspace of the 15-dimensional $SU(4)$. Hence condition four is satisfied almost everywhere in $SU(4)$. Therefore, for Haar random $U_3$ it is almost surely satisfied. 

    Consider arbitrary $U_1$ and $U_2$. If $U_1$ and $U_2$ are non-commuting elements of infinite order in $SU(2)$, then they generate a dense subset of $SU(2)$. This is because the only closed positive-dimension Lie subgroup of $SU(2)$ is itself, so it must be equal to the closure of the group generated by $U_1$ and $U_2$. Consider $U_1$ and $U_2$ as a point in $SU(2) \times SU(2)$. Since elements of $SU(2)$ can also be characterized as rotations of the Bloch sphere, then commuting elements correspond to rotations around the same axis. Thus the subspace of commuting $U_1$ and $U_2$ is isomorphic to $SU(2) \times S$. Likewise, elements of finite order in $SU(2)$ correspond to rotations about any axis by a rational multiple of $\pi$, and the subspace is isomorphic to $S^2 \times \mathbb{Q}$, which has measure zero. Thus almost everywhere in $SU(2) \times SU(2)$, $U_1$ and $U_2$ generate a dense subset of $SU(2)$. Hence, for Haar random $U_1$ and $U_2$, this is almost surely the case.
\end{proof}

\renewcommand{\thetheorem}{5}
\begin{theorem}
    \label{typicalembeddings}
    Given a specific set of word embeddings $V$ it is possible to check numerically that the conditions of Theorem \ref{worstworst} are satisfied. 
\end{theorem}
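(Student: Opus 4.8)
The plan is to treat the four conditions of Theorem \ref{worstworst} separately, reducing the two nontrivial ones (conditions three and four) to the explicit algebraic characterizations already extracted in the proof of Theorem \ref{haarrandom}. Conditions one and two are purely structural: condition one is verified by inspecting each word embedding and confirming that the number of qubits it acts on equals the number of its input wires, and condition two is checked by counting the proper nouns available in $V$ (for a fixed vocabulary one only needs enough proper nouns to assign one to each qubit of the target circuit, so the relevant check is whether $V$ supplies at least as many proper nouns as the intended reduction requires). Both are finite inspections.

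For condition four, recall from the proof of Theorem \ref{haarrandom} that a transitive verb $U_3 \in SU(4)$ fails to be entangling exactly when it is separable or locally equivalent to the SWAP gate. Local equivalence classes of two-qubit gates are completely determined by a finite set of local invariants, equivalently by the interaction coefficients $(c_1,c_2,c_3)$ appearing in the Cartan/KAK decomposition $U_3 = (A_1 \otimes A_2)\exp(i(c_1 XX + c_2 YY + c_3 ZZ))(B_1 \otimes B_2)$. These invariants can be computed numerically by standard linear algebra, and the gate is non-entangling iff $(c_1,c_2,c_3)$ coincides with the coordinates of the identity or of SWAP in the Weyl chamber. Thus one computes these invariants for each transitive verb and checks that at least one of them lies away from the two excluded points, up to the working precision.

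For condition three, the proof of Theorem \ref{haarrandom} shows that two adjectives $U_1,U_2 \in SU(2)$ generate a dense subgroup iff they are non-commuting and each has infinite order. Non-commutativity is certified directly by evaluating $\|U_1U_2 - U_2U_1\|$ and confirming it is bounded away from zero. Infinite order is the delicate part: writing the eigenvalues of $U_i$ as $e^{\pm i\theta_i}$, the element has finite order precisely when $\theta_i$ is a rational multiple of $\pi$. Detecting this from finite-precision floating-point data is not possible in general, since rationality of a real number is not a numerically stable property — this is the main obstacle.

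To resolve it, I would replace the exact topological density condition by the \emph{effective} density that the reduction of Theorem \ref{worstworst} actually uses. That reduction only requires approximating a fixed finite set of single-qubit gates (the $U_{N_i}^\dagger$, $U_X$, and the gates of the target circuit) to a prescribed precision $\epsilon'$ by bounded-length words in $U_1,U_2$; the inverse-free Solovay--Kitaev step is invoked purely as a means to reach that precision. Hence the numerically decidable check is to enumerate all words in $U_1,U_2$ up to some length $L$ and verify that the resulting finite set forms an $\epsilon'$-net of $SU(2)$, or at least covers the specific gates needed. This is a finite computation, robust to the working precision, and its success certifies exactly the property that makes the reduction go through. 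If instead the parameters are given exactly as algebraic numbers, one can additionally decide infinite order rigorously by testing whether $\theta_i/\pi$ is rational using effective bounds from transcendence theory; but for embeddings learned as floating-point parameters, the $\epsilon'$-net test is the appropriate and sufficient certificate.
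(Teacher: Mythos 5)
Your handling of conditions one, two, and four is sound. For condition four, your KAK/Weyl-chamber invariant test is a legitimate alternative to the paper's check (the paper instead computes the rank of a reduced Choi state of $U_3$ and of $U_3 \cdot \mathrm{SWAP}$); both certify ``entangling'' as an open condition that is robust at finite precision.

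The gap is in condition three. You correctly observe that ``infinite order'' cannot be decided from floating-point data, but your substitute test rests on a misreading of how Theorem \ref{worstworst} uses density. BQP-hardness is asymptotic: the circuit $C$ ranges over instances of unbounded size $m$, and the total error budget in that proof is a \emph{constant} ($\epsilon' + \epsilon < \frac{1}{6}$), so each of the $\Theta(m+n)$ compiled gates (including one $U_{N_i}^\dagger$ per noun, whose number also grows) must be approximated to precision $O(1/m)$, which tends to zero. There is therefore no ``fixed finite set of single-qubit gates'' and no single ``prescribed precision $\epsilon'$''. Consequently your fallback check (``covers the specific gates needed'') certifies nothing about larger instances, and your main check --- that words of length at most $L$ form an $\epsilon'$-net of $SU(2)$ --- certifies density only if $\epsilon'$ is below a universal threshold: small enough that no proper closed subgroup of $SU(2)$ is itself an $\epsilon'$-net (note the binary icosahedral group, with $120$ elements, is a moderately fine net), and small enough to seed the Solovay--Kitaev recursion so that arbitrary precision can then be bootstrapped. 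You never fix such a threshold, so as stated the test can pass for non-dense pairs and does not deliver the property the reduction needs. (A smaller point: density does \emph{not} imply both generators have infinite order, so the ``iff'' you attribute to the proof of Theorem \ref{haarrandom} is also not correct; that proof only establishes sufficiency.)

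The paper's proof avoids order and rationality testing altogether by invoking the classification of closed subgroups of $SU(2)$: any proper closed subgroup containing a non-commuting pair is, up to conjugacy, either contained in the normalizer of a maximal torus, $S^1 \rtimes C_2$, or is one of three exceptional finite groups of order at most $120$. In the normalizer, the square of every element lies in the torus, so all squares commute; hence exhibiting a single non-commuting pair of squares of words in $U_1, U_2$ rules out that case, and exhibiting $121$ pairwise-distinct group elements rules out the exceptional groups. Both witnesses are open conditions, hence verifiable at working precision, and they exist exactly when the generated subgroup is dense --- a sound and complete numerical certificate, where your net-based test needs additional (and currently missing) quantitative input to be sound.
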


\begin{proof}
    Let $U_1$ and $U_2$ be the adjectives defined by condition three of Theorem \ref{worstworst}, and $U_3$ be the transitive verb identified by condition four. Consider the state $\ket{U_3}$ obtained via the Choi-Jamio\l kowski isomorphism. If we compute the reduced density matrix $\hat{\rho} = \ptr{2}{\ket{U_3}\bra{U_3^\dagger}}$ (where the partial trace is taken over both qubits corresponding to the second qubit of $U_3$), then this is pure (i.e has rank one) if and only if $U_3$ is separable. Therefore, we can check that $U_3$ is entangling by performing this test on both $U_3$ and $U_3 \cdot SWAP$ (to check that it is not locally equivalent to a SWAP).

    To check that $U_1$ and $U_2$ generate a dense subset of $SU(2)$, first check that they do not commute. We wish to check that the closure of the subgroup generated by $U_1$ and $U_2$ is $SU(2)$ itself. The only infinite closed nonabelian proper subgroup of $SU(2)$ is isomorphic to $S \rtimes_\phi C_2$ (where $\phi_0(x) = x$, $\phi_1(x) = -x$). In this subgroup for all $x$ and $y$, $x^2$ and $y^2$ commute, so to rule out this case it suffices to find a non-commuting pair. This can be done numerically by generating elements of the group by taking products and checking randomly until an example is found.
    
    The only finite closed nonabelian subgroups of $SU(2)$ are the binary dicylic groups, and three groups of order 24, 48 (the single-qubit Cliffords), and 120, by the ADE classification. The binary dicylic groups are subgroups of $S \rtimes_\phi C_2$, so we have already ruled these out. Therefore, it remains to check that the order of the group generated by $U_1$ and $U_2$ is more than 120. This can be done numerically by testing products of group elements until 121 unique elements are generated.
\end{proof}

\renewcommand{\thetheorem}{6}
\begin{theorem}
    \label{averagetexts}
    There exists a polynomial $f(m, n)$, such that given an oracle to solve \textup{\textsc{QDisCoCirc-QA}} instances with arbitrary word embeddings and text circuits drawn from $\mathcal{D}_{f(m, n)}$, we can perform arbitrary quantum computations with $m$ $2$-local gates on $n$ qubits with high probability in polynomial time. That is, \textup{\textsc{QDisCoCirc-QA}} for worst-case word embeddings is average-case BQP-hard over texts.
\end{theorem}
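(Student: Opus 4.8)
The plan is to reduce \textsc{Approx-QCircuit-Prob} to \textsc{QDisCoCirc-QA} in the same spirit as Theorem~\ref{worstworst}, but with the roles reversed: the text circuit is handed to us at random from $\mathcal{D}_{f(m,n)}$, while we retain full freedom over the word embeddings $V$. Given a circuit $C$ with $m$ two-local gates on $n$ qubits, I would set $k = f(m,n)$ for a polynomial $f$ to be fixed later, sample $T \sim \mathcal{D}_k$ together with two question texts $Q_1, Q_2$ drawn the same way, inspect the realized boxes, and then \emph{design} $V$ so that the random text $T$ implements exactly $C$. The crucial simplification over Theorem~\ref{worstworst} is that, because $V$ is worst-case, I can assign each chosen box its target gate \emph{exactly}, eliminating the inverse-free Solovay--Kitaev approximation and its error term $\epsilon'$.

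The central device is the abundance of \emph{hapax legomena}. I would designate a set of $m$ boxes of $T$ whose labels occur exactly once in the whole instance $(T, Q_1, Q_2)$, assign those labels the gates of $C$ in order, and set every other embedding (including all noun-state preparations, so that each wire starts in $\ket{0}$) to the identity; then $Q_1$ becomes all-identity, while a single instance-wide hapax box of $Q_2$ on the wire playing the role of qubit~$1$ is programmed to $X$. Because the active labels are instance-wide hapax, there is no consistency conflict across $T$, $Q_1$, $Q_2$: the resulting $U_T$ equals $C$ acting on $\ket{0}$ (junk wires carry $\ket{0}$ throughout), $\rho_{Q_1}$ restricts to $\ket{0}\bra{0}$ and $\rho_{Q_2}$ to $\ket{1}\bra{1}$ on qubit~$1$, so $\tr{\rho_T(\rho_{Q_1}\otimes I)} = P_C(0)$ and $\tr{\rho_T(\rho_{Q_2}\otimes I)} = P_C(1)$. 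The correctness of the reduction --- that the oracle's answer $j$ distinguishes $P_C(0) > \tfrac{2}{3}$ from $P_C(0) < \tfrac{1}{3}$ for $\epsilon < \tfrac{1}{6}$ --- then follows verbatim from the error analysis of Theorem~\ref{worstworst}, now with $\epsilon' = 0$.

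The heart of the argument is a probabilistic embedding lemma asserting that a sample from $\mathcal{D}_k$ admits such a designation with high probability, which I would establish by a greedy, order-respecting matching. Fix $n$ of the $k$ wires as the qubits of $C$ and split the $M = \Omega(k^{\gamma})$ boxes into $m$ consecutive temporal blocks of size $\Theta(M/m)$. For gate $g_t$, acting on wire set $S_t$, I search block $t$ for a box that (i) has the right arity, (ii) acts on exactly $S_t$ --- a two-input box hits a given pair in either order with probability $\Theta(k^{-2})$, and $P(A=2)>0$ supplies such boxes at a constant rate, with wire order absorbed into the assigned gate --- and (iii) carries an instance-wide hapax label. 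Since the i.i.d.\ labels are independent of the uniformly chosen wires, and since (by the extrapolated Heap/Zipf behaviour assumed above) the number of hapax labels is $\Omega(M^{1/2}) = \Omega(k^{\gamma/2})$, the per-box success probability is $\Theta(k^{-2}\,k^{-\gamma/2})$, so the expected number of admissible boxes per block is $\Theta(k^{\gamma/2-2}/m)$. As $\gamma > 4$ gives $\gamma/2 - 2 > 0$, choosing $f(m,n)$ to be a sufficiently large polynomial (also forcing $k \ge n$) drives this expectation above any fixed power of $m$; a Chernoff bound per block, a union bound over the $m$ blocks and over the $O(1)$ extra boxes needed for $Q_2$, then yield the matching with high probability.

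The main obstacle I anticipate is exactly this lemma: one must secure \emph{simultaneously} the correct wire incidences, the temporal ordering, and the hapax property, all from a single random draw in which the label statistics (Zipf) and the wire statistics interact. The delicate point is that the hapax count is only polynomially large --- on the order of $M^{1/2}$ --- so the reduction goes through only because $\gamma > 4$ leaves a positive margin $\gamma/2 - 2$ after paying the $k^{-2}$ cost of pinning each two-qubit gate to a fixed wire pair. This is precisely why $\mathcal{D}_k$ is specified with $\gamma > 4$, and why the extrapolation of the hapax scaling beyond empirically observed text sizes is a necessary hypothesis.
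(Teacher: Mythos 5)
Your overall strategy --- sample $T \sim \mathcal{D}_k$, design the word embeddings \emph{after} seeing the draw so that boxes with unique (hapax) labels implement the gates of $C$ exactly while every other word is the identity, and then invoke the error analysis of Theorem \ref{worstworst} --- is essentially the paper's own reduction. Your block-wise greedy matching is a minor repackaging of the paper's sequential gap argument (geometric waiting times between matched boxes), and the bookkeeping agrees: per-box success probability $\Theta\left(P(A=2)\,k^{-2}\,k^{-\gamma/2}\right)$, budget $\Theta(k^{\gamma})$ boxes, hence $k = \Theta\left((m\log m)^{2/(\gamma-4)}\right)$, polynomial exactly because $\gamma > 4$. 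Your insistence on \emph{instance-wide} hapax labels is, if anything, slightly more careful than the paper on the issue of label collisions between $T$ and the questions.

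There is, however, a genuine error in your construction of the questions. You draw $Q_1, Q_2$ from $\mathcal{D}_k$, i.e.\ as full-width texts on all $k$ wires, so $\rho_{Q_1} = \ketbra{0\cdots 0}{0\cdots 0}$ occupies every wire and the identity factor in $\rho_{Q_j}\otimes I$ is trivial. Then $\tr{\rho_T(\rho_{Q_1} \otimes I)} = |\bra{0\cdots 0}U_T\ket{0\cdots 0}|^2 = |\bra{0\cdots 0}C\ket{0\cdots 0}|^2$, the probability of the \emph{all-zeros} outcome, not the marginal $P_C(0)$ on the first qubit that you claim; likewise $\tr{\rho_T(\rho_{Q_2} \otimes I)}$ becomes the probability of the outcome $10\cdots 0$. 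Neither quantity is controlled by the \textsc{Approx-QCircuit-Prob} promise: take $C$ with $C\ket{0\cdots 0} = \ket{0}\otimes\ket{W}$ where $\ket{W}$ is orthogonal to $\ket{0\cdots 0}$ on the remaining $n-1$ qubits; then $P_C(0)=1$ yet both overlaps vanish, so the oracle's answer $j$ carries no information and the reduction fails. The underlying obstruction is structural: a pure question state covering all wires can never compute a marginal probability --- the entire role of the $\otimes I$ (equivalently, of discarding the text wires outside the question) is to trace out qubits $2,\dots,n$. The fix is what the paper does: draw the questions from $\mathcal{D}_1$, so each lives on a single noun wire, set all of their boxes to the identity, and assign the noun state to $\ket{0}$ for $Q_1$ and $\ket{1}$ for $Q_2$ (your variant of programming a hapax box to $X$ also works, provided the question occupies only one wire). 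With that replacement your argument goes through and coincides with the paper's proof.
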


\begin{proof}
    We will reduce from the problem \textsc{Approx-QCircuit-Prob}. We are given an arbitrary quantum circuit on $n$ qubits with $m$ two-qubit gates given by arbitrary unitaries, and we assume that any single-qubit gates is combined into an adjacent two-qubit gate -- the only time this is not possible is when a qubit is unentangled, so it can be disregarded (if it is the measurement qubit, we can calculate the outcome immediately, and if it is not, it doesn't affect the outcome). 

    Suppose that we have sampled a large random text circuit from $\mathcal{D}_k$ for some $k \gg n$. We wish to calculate how many boxes are required so that there exists a subsequence which matches the given circuit. By match, we mean that each gate has a corresponding box in the circuit with a unique label, operating on the same wires as the gate. Let $N_i$ be the gap between gates $i$ and the previous gate (or that start of the circuit) -- that is, the number of boxes between them in the circuit that are not part of the subsequence. Because all boxes in the circuit are chosen in an identical random manner, $P(N_i \geq q) = (1 - p_i)^q$ where $p_i$ is the probability of any particular box being correct for gate $i$. Since inputs to boxes are chosen uniformly, we have that all the $p_i = p$ are identical. 

    Therefore, we can bound the probability that the total sequence length required is at least some constant $q$:
    \begin{align*}
        P\left(\sum_i N_i \geq q\right) &\leq 1 - P\left(\bigwedge_i N_i < \frac{q}{m} \right) = 1 - \left(1 - P\left(N_i \geq \frac{q}{m}\right)\right)^m = 1 - (1 - (1 - p)^{\frac{q}{m}})^m \\
        &\leq 1 - (1 - m(1 - p)^\frac{q}{m}) = m(1 - p)^\frac{q}{m} \leq m e^{-\frac{pq}{m}}
    \end{align*}
    The second line follows from Bernoulli's inequality. Thus letting $q = \frac{m}{p}(\log(m) + 1)$, we have that $P\left(\sum_i N_i \geq q\right) \leq \frac{1}{e}$. We have
    $$ p = P(A = 2) \cdot P(\mathrm{hapax}) \cdot \frac{2}{k(k - 1)} $$
    where $P(\mathrm{hapax})$ is the probability that any particular box has a unique label, since $\frac{2}{k(k - 1)}$ is the probability of selecting any particular pair of wires uniformly, assuming that $k > n$. According to Heap's law, the number of distinct words in a text grows like $\sqrt{w}$ where $w$ is the total number of words. Furthermore from Zipf's law, a constant proportion of distinct words occur only once (\emph{hapax legomena}), usually about half for large corpora. By assumption, this law holds exactly for circuits drawn from $\mathcal{D}_k$, so this implies that
    $$ P(\mathrm{hapax}) \leq \frac{c_1\sqrt{k^\gamma}}{k^\gamma} = \frac{c_1}{\sqrt{k^\gamma}} ~~\text{ and }~~ p \geq \frac{c_2}{k^{2 + \frac{\gamma}{2}}}$$
    for some constants $c_1, c_2 > 0$, since we assume $w \geq c_3 k^\gamma$ for some $c_3$. Therefore, we can set $q \leq \frac{1}{c_2}m(\log(m) + 1)k^{2+\frac{\gamma}{2}}$, but we also require $c_3k^\gamma \geq q$ to ensure the number of boxes required is less than the total in the text. This can be enforced by setting:
    $$ c_3k^\gamma \geq \frac{1}{c_2}m(\log(m) + 1)k^{2+\frac{\gamma}{2}} \implies k \geq \left(\frac{1}{c_2c_3}m(\log(m) + 1)\right)^{\frac{2}{\gamma - 4}}$$
    Finally, in order to make sure $k \geq n$, we can set $k = f(m, n) = \max(n, \lceil (c m(\log(m) + 1))^{\frac{2}{\gamma - 4}}\rceil)$ for some $c > 0$.

    The protocol to solve \textsc{Approx-QCircuit-Prob} is as follows: we are given a circuit $C$ on $n$ qubits with $m$ two-qubit gates (single-qubit gates are eliminated as before). Sample a text circuit from $\mathcal{D}_k$ with $k$ as given above. With probability at least $1 - e^{-1}$, there exists a subsequence of boxes in the circuit that matches $C'$. We construct a set of word embeddings for the circuit by setting each word that does not appear in the subsequence to be the identity, and every word that is part of the subsequence to be the unitary of the corresponding gate in the circuit. Each state is assigned to the $\ket{0}$ quantum state. This defines the text circuit and embeddings for the context text $T$. Similarly, we can obtain the questions $Q_1$ and $Q_2$ by drawing from $\mathcal{D}_1$, setting all boxes to the identity, and assigning the state to $\ket{0}$ and $\ket{1}$ respectively.
    
    In the same way as in Theorem \ref{worstworst}, solving this instance of \textsc{QDisCoCirc-QA} with $\epsilon < \frac{1}{7}$ is sufficient to solve the corresponding \textsc{Approx-QCircuit-Prob} instance. $f(m, n)$ as given above grows polynomially if $\gamma > 4$, and the construction of the word embeddings is done in polynomial time. Since this procedure has fixed positive success probability, it can be amplified to any high success probability with logarithmically many iterations.
\end{proof}

\section{Oracle Construction}
\label{sec:oracle}

As a warmup, we will start by constructing oracles for a different task called bAbI1, first proposed in \cite{weston2015aicomplete}. In the DisCoCirc model of the bAbI1 task, we have a context text from which we extract information by asking a series of yes/no questions. An example context text is:

\begin{quote}
    John goes to the bedroom. \\
    Mary walks to the hallway. \\
    John goes back to the bathroom.
\end{quote}

And the task then is to answer a question about the state of the world. For example: ``Where is John?". In DisCoCirc, we do this by stating a potential answer to the question as an affirmative: ``John is in the bathroom." Then we measure which potential answer (``John is in the bathroom", ``John is in the hallway", etc) has the highest overlap with the context text, discarding the nouns that do not occur in the answer. Graphically, this means we wish to find the maximum of the following circuits:
$$\tikzfig{babi1-overview}$$
We can phrase this as a instance of the closest vector problem by noting that this is equivalent to finding $P_i$ which maximizes the following probability. The set $\{P_1, P_2 \cdots P_i  \cdots\}$ is any set of permutations such that $P_j$ brings the $j$th input to the top.
$$\tikzfig{babi1-split}$$
To view this as a closest vector problem, we divide the problem along the dotted line, letting the upper portion be the variable set of states $\rho_i$ and the lower portion be $\rho_0 = \ket{v_0}\bra{v_0}$. Since $\rho_0$ is pure, we can apply the quantum closest vector algorithm of Basheer et al \cite{basheer2021quantum} to find $i = \arg \max_{j \geq 1} \mathrm{F}(\rho_j, \rho_0)$. For this task the oracles $\mathcal{V}$ and $\mathcal{W}$ are given by the following circuits:
$$\tikzfig{babi1-oracles}$$

Here, $P$ is a 1-to-$n$ \emph{multiplexer}, it maps each wire to the top output (leaving the rest undefined) depending on the control register $\ket{i}$. There are several ways of constructing this operation. For instance, we can do a CSWAP controlled on every possible bitstring of the control register $\ket{i} = \ket{i_1 i_2 \cdots i_b}$. Here, the CSWAPs are swapping nouns, so they can be realized with  $O(\log N)$ CSWAPs acting on qubits. This looks like the following circuit: 
$$\tikzfig{multiplexer-option1}$$
The gate complexity of this is $O(n \log(N) \log(n))$ since we have to put a CX gate with $O(\log(n))$ controls on each of $O(n \log(N))$ qubits, and swaps the noun at the index with the top noun, leaving the rest unchanged. However, we can do better by leaving the other nouns in an unspecified order using a binary search method - first, we swap the first and second half of the nouns depending on the first bit of the index, then the first and second quarter depending on the second bit of the index, and so on, until the last bit controls whether we swap the top two nouns. This requires $n\left(\frac{1}{2} + \frac{1}{4} + \frac{1}{8} + \cdots\right) O(\log(N)) = O(n \log(N))$ singly-controlled Toffolis, so has an overall gate complexity of $O(n \log(N))$. For example, the circuit on eight words looks like the following:
$$\tikzfig{multiplexer-option2}$$
$P$ can be constructed recursively as:
$$\tikzfig{multiplexer-option2-construction}$$
Therefore, the oracles are given as follows in pseudocode:
\begin{lstlisting}
fn V(n1, n2) {
    IsIn(n1, n2);
}

fn NounCSWAP(ctrl, n1, n2) {
    for q in 0..log(N) {
        if ctrl {
            SWAP(n1[q], n2[q]);
        }
    }
}

fn P(i, nouns) {
    if nouns.len() == 1 {
        return;
    } else if nouns.len() == 2 { 
        NounCSWAP(i[0], nouns[0], nouns[1])
    } else {
        const m = nouns.len() / 2;

        for j in 0..m {
            NounCSWAP(i[0], nouns[j], nouns[m + j]);
        }

        P(i[1..], nouns[..m]);
    }
}

fn W(i, n1, n2) {
    let ancillas = InitAncillas(log(N) * (n - 2));
    let nouns = [n1, n2, ..ancillas];
    Text(nouns);
    P(i, nouns);
    Discard(ancillas);
}
\end{lstlisting}

The gate complexity of the $\mathcal{W}$ oracle is thus $O\left(w A(N, d) + n \log(N)\right)$, and for the $\mathcal{V}$ oracle it is $O(A(N, 2)) = O(A(N, d))$. It is reasonable to make the assumption that $A(N, d) = O(d \log(N))$. For example, this is the case if we are using standard ansatze like those from Sim et al \cite{Sim_2019}. In the closest vector algorithm for this task, we have $M = O(n)$. Therefore, for a fixed success probability and precision $\epsilon$ for calculating fidelity, the overall gate complexity of the algorithm is:
$$ O\left(\frac{\sqrt{n}}{\epsilon} (w d + n + \log(N)) \log(N)\right) \leq O\left(\frac{\sqrt{n} w d \log^2(N)}{ \epsilon}\right) $$ 
The inequality here follows from $n = O(w)$. Furthermore, the number of ancilla used by the oracles is at most $O(n \log(N))$.

\subsection{The Text-Text Similarity Task}

The text-text similarity task involves finding a text that is most similar to a target text, by maximizing the overlap between the states representing the two texts. This is a versatile primitive that can be used both for full text-text similarity and for some variants of question answering. As discussed in Section \ref{sec:grover}, we can use the closest vector algorithm to solve this problem. Pictorially, we want to find $i$ such that
$$\tikzfig{tt-def}$$
is maximized. For this task the oracles $\mathcal{V}$ and $\mathcal{W}$ are given by:
\begin{itemize}
    \item $\mathcal{V}\ket{0} = \ket{T_0}$
    \item $\mathcal{W}\ket{i} = \ket{T_i}$
\end{itemize}
Clearly, we have $\mathcal{V} = U_{T_0}$ which is the direct translation of the text circuit into a quantum circuit via QDisCoCirc. However, for $\mathcal{W}$, we need a significantly more complicated circuit. Let us define the following parameters:
\begin{itemize}
    \item Each noun wire has dimension $N$,
    \item Each text circuit considered has at most $w$ unitaries and $n$ noun wires total,
    \item Each unitary in the text circuit acts on at most $d$ wires,
    \item There are $V$ words in the vocabulary,
    \item The number of gates in the ansatz for $d$ wires of dimension $N$ scales as $A_g(N, d)$,
    \item The number of parameters in the ansatz for $d$ wires of dimension $N$ scales as $A_p(N, d)$,
    \item Each parameter for the ansatz is stored with $P$ bits of precision,
    \item We consider $M$ possible texts, so that $i \leq M$.
\end{itemize}

We will construct the circuit for $\mathcal{W}$ by interleaving layers of two operations. One operation is a parameterized partial permutation, which permutes some of its inputs according to information stored in a QRAM. The second is a parameterized ansatz, which applies an ansatz of a given size to a set of qubits with the parameterized angles of the ansatz stored in a QRAM. By combining these two operations, we can reconstruct a text circuit by loading from QRAM the parameters and argument locations for each unitary.

We assume access to the following QRAM oracles:
\begin{itemize}
    \item Param$_{i j}\ket{k}\ket{0} = \ket{k}\ket{\phi_{i j k}}$ where $\phi_{i j k}$ is a $P$-bit fixed-precision binary encoding of the $j$th parameter for the ansatz of the $i$th unitary in $T_k$,
    \item Width$_i \ket{k}\ket{0} = \ket{k}\ket{W_{i k}}$ where $W_{i k}$ is a binary encoding of the width of the $i$th unitary in $T_k$,
    \item Index$_{i j}\ket{k}\ket{0} = \ket{k}\ket{I_{i j k}}$ where $I_{i j k}$ is an encoding of the position of the $j$th argument to unitary $i$ in $T_k$. Note that the encoding here is not a straightforward mapping from zero-indexed position to binary, and we will elaborate more later.
\end{itemize}
Each of these oracles can be implemented using a series of calls (one for each bit) to bucket-brigade QRAMS of size $M$ \cite{giovannetti2008quantum}. In pseudocode, each of these lookups is as follows:

\begin{lstlisting}
fn QRAMLookup(dict, qidx, cidx, out) {
    const ORACLES: [[Unitary]];
    let size = out.len();
    for i in 0..size {
        ORACLES[dict][cidx * size + i](qidx, out[i]);
    }
}
\end{lstlisting}
where \texttt{dict} specifies the chosen oracle, \texttt{qidx} the quantum index $\ket{k}$ and \texttt{cidx} the classical index (i.e $i j$ or $i$).

For the parameterized ansatz, we can construct a circuit out of some building blocks:
\begin{itemize}
    \item Firstly, we have a parameterized controlled rotation gate, which implements a controlled-$Z$ rotation based on an angle specified in auxiliary qubits.
    \item For a given ansatz of a specific size, we can implement a parameterized controlled circuit by replacing all gates except for parameterized rotations with their controlled versions. For the parameterized rotations, we then load the parameters from QRAM and implement each rotation using a parameterized controlled rotation circuit.
    \item This gives a parameterized ansatz of a specific width. To cover all possible widths, first we can load the width from QRAM, and perform the parameterized controlled ansatz of all possible widths, controlled on the width loaded from QRAM. This gives a general parameterized ansatz.
\end{itemize}
For the parameterized controlled rotation gate, we have:
\begin{lstlisting}
fn PCRz(ctrl, angle, q) {
    const MAX_PREC;
    for i in 0..=MAX_PREC {
        CCRz(pi / (2^i), angle[i], ctrl, q);
    }
}
\end{lstlisting}
which is diagrammatically:
$$\tikzfig{tt-oracle/pcrz}$$
This has a gate complexity of $O(P)$.

Then for a specific realization of an ansatz of a given width, we can convert it to a parameterized controlled ansatz as follows. Here \texttt{angle\_base} refers to the location in QRAM of the parameters for this ansatz, \texttt{text\_idx} is the quantum register specifying the text index, and \texttt{angle} is a ancilla register for storing a parameter.
\begin{lstlisting}
fn PCAnsatz(ansatz, ctrl, text_idx, angle_base, angle, qs) {
    for gate in ansatz {
        match gate {
            CX(a, b) => CCX(ctrl, qs[a], qs[b]),
            RzParam(idx, a) => {
                QRAMLookup(
                    ANGLES, text_idx, angle_base + idx, angle
                );
                PCRz(ctrl, angle, qs[a]);
                QRAMLookup(
                    ANGLES, text_idx, angle_base + idx, angle
                );
            },
            RzFixed(p, a) => CRz(p, ctrl, qs[a]),
            H(a) => CH(ctrl, qs[a])
        }
    }
}
\end{lstlisting}
For example, for an ansatz of width one (parameterized Euler rotations), corresponding to $d = 1, N = 2$, we have:
$$\tikzfig{tt-oracle/pcansatz1}$$
And for a one-layer of ansatz 9 from Sim et al. \cite{Sim_2019} of width four (i.e for $d = 4$ and $N = 2$ or $d = 2$ and $N = 4$) \cite{Sim_2019}, we have the following:
$$\tikzfig{tt-oracle/pcansatz}$$
This has a gate complexity of $O(A_g(N, d) + P \log(M) A_p(N, d))$. These ansatze of differing widths can be combined into one parameterized ansatz that controls all of these based on a width loaded from QRAM. We wish to do the following operation
\begin{lstlisting}
fn PAnsatz(width, text_idx, angle_base, angle, qs) {
    const MAX_WIDTH;
    const ANSATZE = [
        AnsatzGates(w) for w in 1..=MAX_WIDTH
    ];


    QRAMLookup(WIDTHS, text_idx, i, width);
    for w in 1..=MAX_WIDTH {
        let ctrl = w == width;
        PCAnsatz(ANSATZE[w], ctrl, text_idx, angle_base, angle, qs)
        uncompute ctrl;
    }
    QRAMLookup(WIDTHS, text_idx, i, width);
}
\end{lstlisting}
which is represented diagrammatically as:
$$\tikzfig{tt-oracle/pansatz-naive}$$
However, compiling this circuit naively results in a gate complexity of 
$$O(d \log(d) \cdot (A_g(N, d) + P \log(M) A_p(N, d)))$$
since each parameterized ansatz has $O(\log(d))$ controls that must be combined. By making use of the unary iteration method, we can compile this more efficiently with gate complexity of $O(d (A_g(N, d) + P \log(M) A_p(N, d)))$. Given a series of unitaries $U_{0 \cdots 0}$ to $U_{1 \cdots 1}$ which we want to apply conditionally given an index register of size $\log(d)$, this technique synthesizes a circuit using singly-controlled versions of $U_{i \cdots j}$ with only $O(d)$ gates. It is based on the following recurrences:
$$\tikzfig{tt-oracle/unary-iteration-0}$$
$$\tikzfig{tt-oracle/unary-iteration-2}$$
This second equation is applied until only single controls on each $U_{i \cdots j}$ remain. In pseudocode, this process is given by:
\begin{lstlisting}
fn UnaryIteration(ctrls, ancillas, func) {
    X(ctrls[0]);
    CUnaryIteration(ctrls[1..], 0, ctrls[0], ancillas, func);
    X(ctrls[0]);
    CUnaryIteration(ctrls[1..], 1, ctrls[0], ancillas, func);
}

fn CUnaryIteration(ctrls, base, prev, ancillas, func) {
    if ctrls.len() > 0 {
        X(ctrls[0]);
        let fresh = ancillas[0];
        CCX(prev, ctrls[0], fresh);
        CUnaryIteration(
            ctrls[1..], 2*base + 0, fresh, ancillas[1..], func
        );
        CX(prev, fresh);
        CUnaryIteration(
            ctrls[1..], 2*base + 1, fresh, ancillas[1..], func
        );
        CCX(prev, ctrls[0], fresh);
        Discard(fresh);
    } else {
        func(base, prev);
    }
}
\end{lstlisting}
Applying this to create a parameterized ansatz, we have the following pseudocode:
\begin{lstlisting}
fn PAnsatz(width, text_idx, angle_base, angle, ancillas, qs) {
    const MAX_WIDTH;
    const ANSATZE = [
        AnsatzGates(w) for w in 1..=MAX_WIDTH
    ];


    QRAMLookup(WIDTHS, text_idx, i, width);
    UnaryIteration(width, ancillas, |idx, ctrl| {
        if idx > 0 && idx <= MAX_WIDTH {
            PCAnsatz(
                ANSATZE[idx], ctrl, text_idx, angle_base, angle, qs
            );
        }
    });
    QRAMLookup(WIDTHS, text_idx, i, width);
}
\end{lstlisting}
For the case that \texttt{MAX\_WIDTH = 4} and hence $\log(d) = 2$, this can be represented diagrammatically as follows (although note that we use a more optimized form of unary iteration on two index qubits). 
$$\tikzfig{tt-oracle/pansatz}$$
Note that in this setup, the index $0 \dots 0$ is mapped to an ansatz with width one, and in general $x$ in binary maps to the ansatz with width $x + 1$. Note that if we wish to apply the identity operator, we can apply an ansatz of width one with all angles set to zero. This has an overall gate complexity of $O(d (A_g(N, d) + P \log(M) A_p(N, d)))$.

For the parameterized partial permutation, we can reuse the multiplexer given before to implement each bit of the partial permutation. This puts a bit of our choice at the top of the multiplexer and leaves the rest permuted. By stacking $d$ of these together, we can specify which qubits should form the top $d$ qubits that the ansatz is applied too. We can calculate the required indices by propagating the desired index forward through all previous changes to calculate the actual index required at each step. Noting that each swap in the multiplexer is actually a swap of noun wires (which consist of multiple qubits), we can represent the multiplexer in pseudocode as:
\begin{lstlisting}
fn NounCSWAP(ctrl, n1, n2) {
    const N;

    for q in 0..log(N) {
        CSWAP(ctrl, n1[q], n2[q]);
    }
}

fn P(idx, nouns) {
    if nouns.len() == 1 {
        return;
    } else if nouns.len() == 2 { 
        NounCSWAP(idx[0], nouns[0], nouns[1])
    } else {
        const mf = floor(nouns.len() / 2);
        const mc = ceil(nouns.len() / 2);

        for j in 0..mf {
            NounCSWAP(i[0], nouns[j], nouns[mc + j]);
        }

        P(idx[1..], nouns[..mc]);
    }
}
\end{lstlisting}
This has a gate complexity of $O(n \log(N))$. From this the parameterized partial permutation can be constructed as
\begin{lstlisting}
fn NounMux(i, idx, text_idx, nouns) {
    const MAX_ARITY;

    for j in 0..MAX_ARITY {
        QRAMLookup(INDICES, text_idx, i * MAX_ARITY + j, idx);
        P(idx, nouns[j..]);
        QRAMLookup(INDICES, text_idx, i * MAX_ARITY + j, idx);
    }
}
\end{lstlisting}
and diagrammatically, this looks as follows:
$$\tikzfig{tt-oracle/nounmux}$$
This has an overall gate complexity of $O(d (n \log(N) + \log(n)\log(M)))$.

Finally, by layering these two parts, we can construct the oracle. In this way, we first move the wires for each unitary to the top of the circuit, apply an ansatz (which applies the given unitary), and apply the inverse of the permutation. Then we repeat until no unitaries are remaining. In pseudocode, we have
\begin{lstlisting}
fn Oracle(text_idx, nouns) {
    const MAX_PARAMS;
    const MAX_WORDS;
    const MAX_WIDTH;
    const MAX_NOUNS;
    const MAX_PREC;
    
    let idx = InitAncillas(MAX_NOUNS.log2());
    let angle = InitAncillas(MAX_PREC);
    let width = InitAncillas(MAX_WIDTH.log2());
    let ui_ancillas = InitAncillas(MAX_WIDTH.log2());
    for i in 0..MAX_WORDS {
        NounMux(i, idx, text_idx, nouns);
        
        PAnsatz(
            width, text_idx, i * MAX_PARAMS, 
            angle, ui_ancillas, nouns[..MAX_ARITY]
        );

        NounMux'(i, idx, text_idx, nouns);
    }
    Discard(idx);
    Discard(angle);
    Discard(width);
    Discard(ui_ancillas);
}
\end{lstlisting}
and diagrammatically this looks like:
$$\tikzfig{tt-oracle/oracle}$$
This has an overall gate complexity of
$$ O(w d (n \log(N) + \log(n)\log(M) + A_g(N, d) + P \log(M) A_p(N, d))) $$
which can be simplified to:
$$ O(w d \log(M) (n \log(N) + P A_g(N, d))) $$

\end{document}